\newcommand{\Tr}[1]{\mathrm{Tr}\left( #1 \right)}
\newcommand{\lstickx}[1]{\lstick{\makebox[1.5em][l]{$#1$}}}
\newcommand{\arrep}[1]{\ar @<4pt> @/^/[#1]|-{\mbox{ $\times L$ }}}
\newtheorem{theomain}{Theorem}
\newtheorem{corollary}{Corollary}
\begin{document}


\title{
\texorpdfstring{Optimal Parameter Configurations for Sequential Optimization of\\
Variational Quantum Eigensolver}{Optimal Parameter Configurations for Sequential Optimization of Variational Quantum Eigensolver}
}

\author{Katsuhiro Endo}
\affiliation{Research Center for Computational Design of Advanced Functional Materials, National Institute of Advanced Industrial Science and Technology (AIST), 1-1-1 Umezono, Tsukuba, Ibaraki, 305-8568, Japan}
\affiliation{Quantum Computing Center, Keio University, Hiyoshi 3-14-1, Kohoku-ku, Yokohama 223-8522, Japan}

\author{Yuki Sato}
\affiliation{Toyota Central R \& D Labs., Inc., Koraku Mori Building 10F, 1-4-14 Koraku, Bunkyo-ku, Tokyo 112-0004, Japan}
\affiliation{Quantum Computing Center, Keio University, Hiyoshi 3-14-1, Kohoku-ku, Yokohama 223-8522, Japan}

\author{Rudy~Raymond}
\affiliation{IBM Quantum, IBM Japan, 19-21 Nihonbashi Hakozaki-cho, Chuo-ku, Tokyo 103-8510, Japan}
\affiliation{Quantum Computing Center, Keio University, Hiyoshi 3-14-1, Kohoku-ku, Yokohama 223-8522, Japan}
\affiliation{Department of Computer Science, The University of Tokyo, 7-3-1, Hongo, Bunkyo-ku, Tokyo 113-0033, Japan}

\author{Kaito Wada}
\affiliation{Department of Applied Physics and Physico-Informatics, Keio University, Hiyoshi 3-14-1, Kohoku-ku, Yokohama 223-8522, Japan}

\author{Naoki Yamamoto}
\affiliation{Quantum Computing Center, Keio University, Hiyoshi 3-14-1, Kohoku-ku, Yokohama 223-8522, Japan}
\affiliation{Department of Applied Physics and Physico-Informatics, Keio University, Hiyoshi 3-14-1, Kohoku-ku, Yokohama 223-8522, Japan}


\author{Hiroshi C. Watanabe}
\affiliation{Quantum Computing Center, Keio University, Hiyoshi 3-14-1, Kohoku-ku, Yokohama 223-8522, Japan}

\begin{abstract}
Variational Quantum Eigensolver (VQE) is a hybrid algorithm for finding the minimum eigenvalue/vector of a given Hamiltonian by optimizing a parametrized quantum circuit (PQC) using a classical computer. 
Sequential optimization methods, which are often used in quantum circuit tensor networks, are popular for optimizing the parametrized gates of PQCs.  
This paper focuses on the case where the components to be optimized are single-qubit gates, in which the analytic optimization of a single-qubit gate is sequentially performed.
The analytical solution is given by diagonalization of a matrix whose elements are computed from the expectation values of observables specified by a set of predetermined parameters which we call the parameter configurations. 
In this study, we first show that the optimization accuracy significantly depends on the choice of parameter configurations due to the statistical errors in the expectation values. 
We then identify a metric that quantifies the optimization accuracy of a parameter configuration for all possible statistical errors, named configuration overhead/cost or \textit{C-cost}.
We theoretically provide the lower bound of C-cost and show that, for the minimum size of parameter configurations, the lower bound is achieved if and only if the parameter configuration satisfies the so-called equiangular line condition.
Finally, we provide numerical experiments demonstrating that the optimal parameter configuration exhibits the best result in several VQE problems.
We hope that this general statistical methodology will enhance the efficacy of sequential optimization of PQCs for solving practical problems with near-term quantum devices.
\end{abstract}

\maketitle

\section{Introduction}

Variational Quantum Eigensolver (VQE)~\cite{Peruzzo2014NatCom, Kandala2017Nat,TILLY20221} 
is a classical-quantum hybrid algorithm implementable on near-term quantum devices, 
for finding the minimum eigenvalue/vector of a given Hamiltonian; 
the recipe is simply to prepare a parametrized quantum circuit (PQC) $U(\bm{\theta})$, also called \textit{ansatz}, 
and then find a parameter $\bm{\theta}$ that minimizes 
$\langle H \rangle = 
\bra{\psi}U(\bm{\theta})^\dagger H U(\bm{\theta})\ket{\psi}$ with some initial 
state $\ket{\psi}$. 
Note that VQE is a class of the variational quantum algorithms (VQAs) 
\cite{cerezo2021variational,huang2022near}, where in general the cost is a non-linear 
function of the expectation values of some Hamiltonians. 
VQA has a wide range of applications, such as quantum chemical calculations~\cite{Peruzzo2014NatCom, Kandala2017Nat,gao2021applications}, combinatorial optimization~\cite{fuller2021approximate, amaro2022case, zoufal2022variational}, and linear equation solvers~\cite{bravo2019variational, xu2021variational, sato2021variational, sato2023variational}.

\begin{figure}[!b]
    \includegraphics[width=1.0\linewidth]{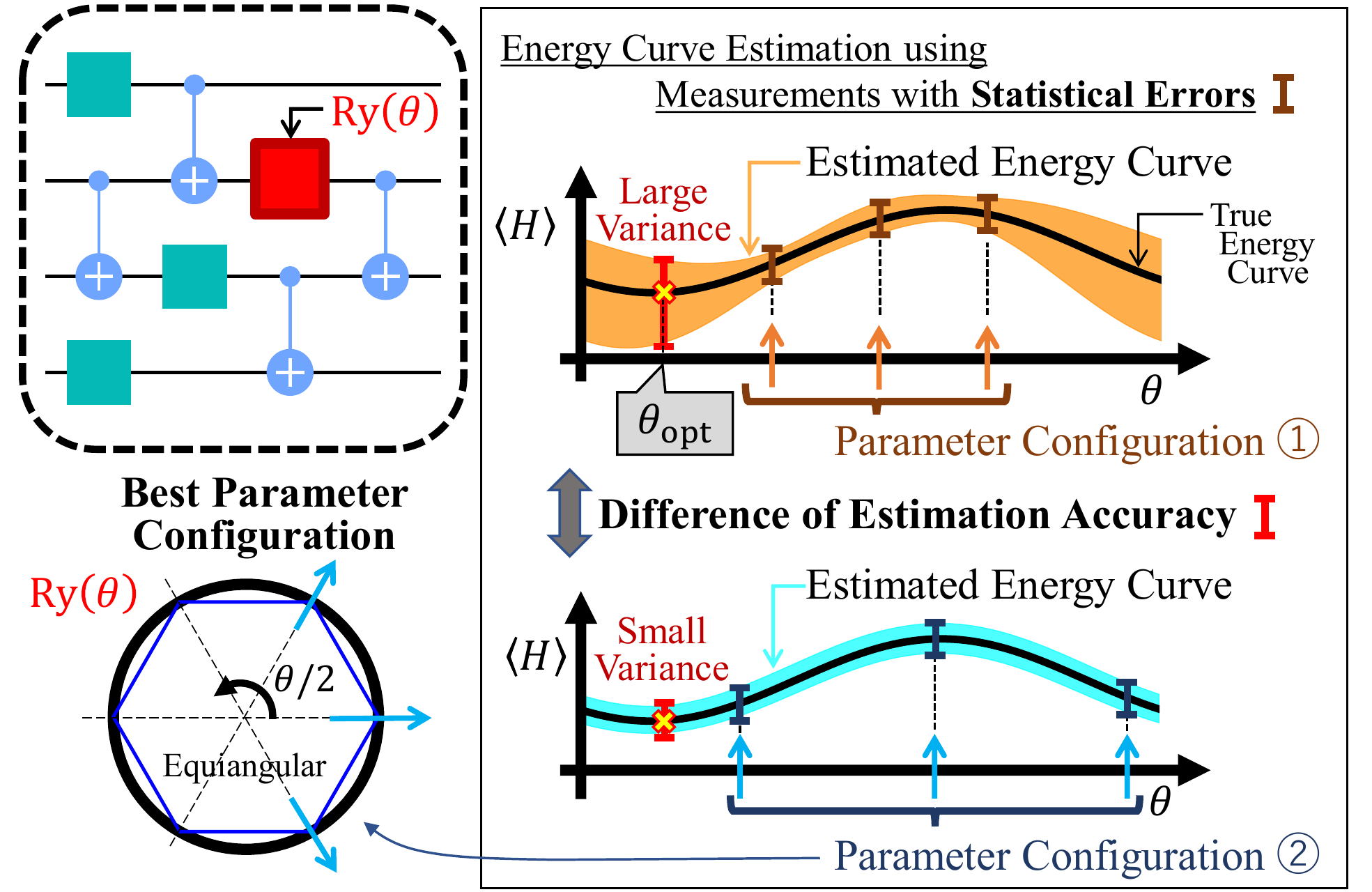}
    \caption{ \textbf{A general view to explain how the estimated optimal solution 
    varies depending on the parameter configuration, when there is statistical errors 
    for determining the cost function.} }
    \label{fig:figschem}
\end{figure}

The core question is how to model the PQC $U(\bm{\theta})$ and how to minimize 
$\langle H \rangle$ with some classical optimizer. 
There have been extensive investigation on this problem \cite{Cerezo2021NatRevPhys}. 
In particular, the {\it sequential optimization method} have been used in a variety 
of settings such as quantum circuit tensor-networks 
\cite{GarnetChan2022PRX,foss2021holographic,haghshenas2021optimization,barratt2021parallel,liu2019variational}, 
where $\bm{\theta}$ corresponds to a set of local unitaries and they are sequentually 
optimized one by one. 
In this paper, we focus on the special type of sequential optimization method developed 
in Refs.~\cite{nakanishi2020,ostaszewski2021,watanabe2021,wada2022,wada2022full}. 
In this framework, $\bm{\theta}$ are the parameters characterizing the set of single-qubit 
rotation gates such as $R_y(\theta) = e^{i\theta Y}$ ($Y$ is the Pauli $y$ 
matrix) in the case of {\it Rotosolve} \cite{nakanishi2020,ostaszewski2021}. 
Then the sequential optimization method takes the strategy to {\it exactly} optimize 
the single rotation gates one by one. 
For example, consider the step where we optimize the $R_y(\theta)$ gate contained 
in the PQC shown in Fig.~\ref{fig:figschem} by minimizing the cost $\langle H \rangle$ 
as a function of $\theta$. 
The point is that, in this case, $\langle H \rangle$ must be of the form of a sinusoidal 
function with respect to $\theta$, and thus the optimal $\theta_{\rm opt}$ can be exactly 
determined once we identify the sinusoidal function shown by the black curve in the figure. 
In particular, as a nature of sinusoidal function, specifying the mean values of three 
observables corresponding to the three points of $\theta$ allows us to exactly identify 
$\langle H \rangle$; 
we call the alignment of these three points of $\theta$ the {\it parameter configuration}. 
Note that, in the case of {\it Free axis selection (Fraxis)} \cite{watanabe2021} where the freedom of a single-qubit 
rotation gate is served by the rotation axis with fixed rotation angle in the Bloch sphere, 
$\langle H \rangle$ takes the form of a quadratic function of a real normalized vector 
$\bm{n}=(x,y,z)^T$, which can also be exactly minimized. 
This setup was further generalized to {\it Free Quaternion Selection (FQS)} 
\cite{wada2022,wada2022full} so that the rotation angle can also be tuned; 
then $\langle H \rangle$ takes the form of a quadratic function of a real normalized 
vector $\bm{q}=(w,x,y,z)^T$. 
In this case, as shown later, the mean values of 10 observables corresponding to 10 points 
of $\bm{q}$ identify $\langle H \rangle$; we also call this $\{\bm{q}_1, \ldots, \bm{q}_{10} \}$ 
the parameter configuration.

However, this optimization strategy relies on the critical assumption that the mean 
values of observables and accordingly $\langle H \rangle$ are exactly identified. 
In reality, those mean values can only be approximately obtained as the average of 
a finite number of measurement results; that is, practically there is always a 
statistical error in $\langle H \rangle$. 
In the above one-dimensional case, as illustrated in Fig.~\ref{fig:figschem}, the energy 
curve, $\theta_\mathrm{opt}$, and consequently the minimum value of $\langle H \rangle$ may 
all largely fluctuate depending on the parameter configuration. 
Hence the question is what is the best parameter configuration for achieving a small 
fluctuation of ${\rm min}\langle H \rangle$. 
In the above one-dimensional case, we have an intuition that the best configuration might 
be such that the three parameters are equally spaced (i.e., equidistant), as shown in the 
left bottom of Fig.~\ref{fig:figschem}, which is indeed true as proven later. 
However, the general case is of course nontrivial; will we have such equidistant 
configuration in some sense, or some biased configuration would be the best? 

\begin{figure*}[t]
    \includegraphics[width=1.0\linewidth]{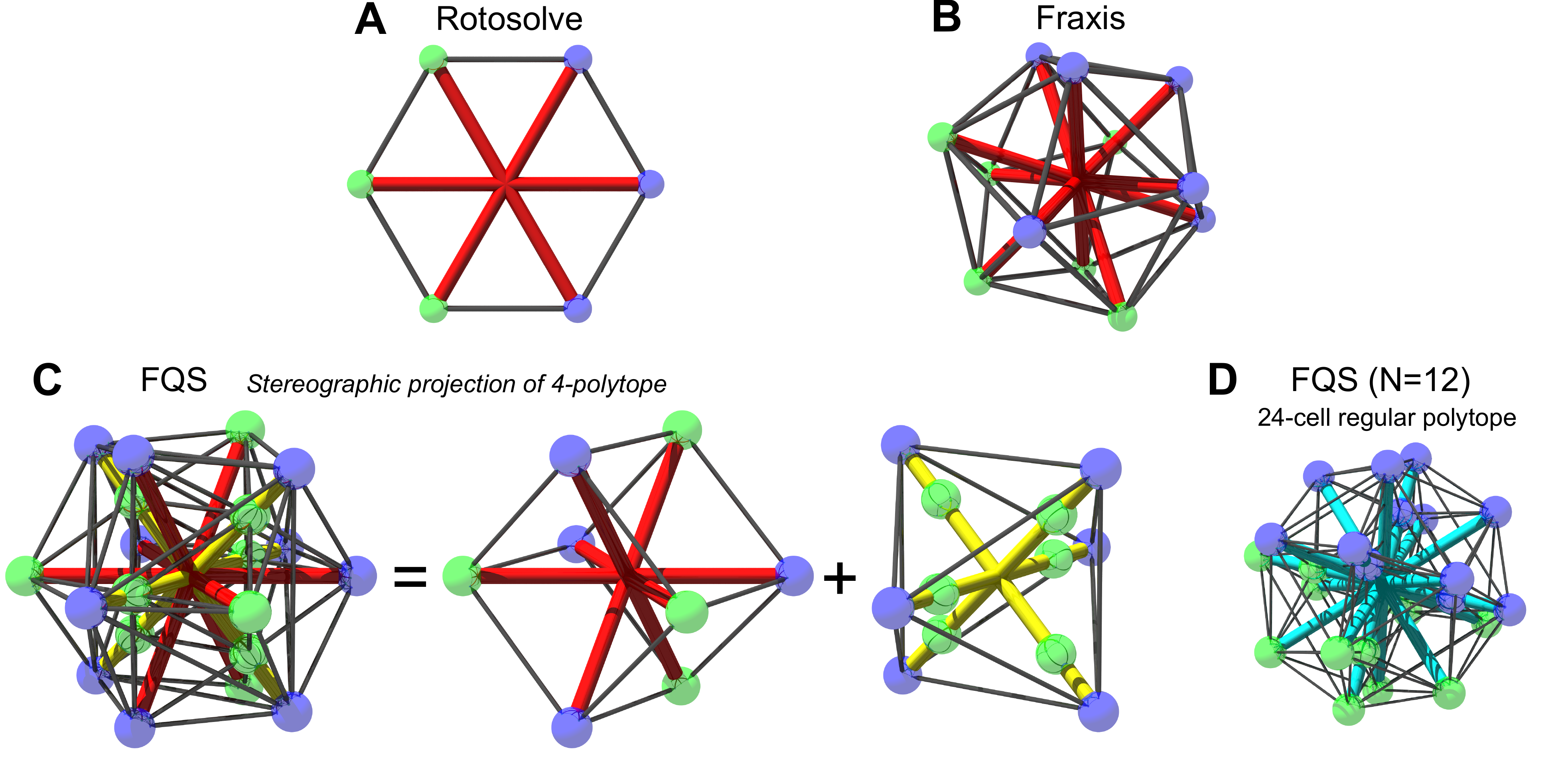}
    \caption{ \textbf{
    Optimal parameter configurations for Rotosolve, Fraxis, and FQS models.} 
    Blue spheres represent the optimal configuration of $\{\bm{q}_i\}_{i=1}^{N}$, and green 
    spheres represent its opposite position, $\{-\bm{q}_i\}_{i=1}^N$. 
    (A) Rotosolve: The diagonal lines between $\bm{q}_i$ and $-\bm{q}_i$ constitute three 
    equiangular lines in 2D space (red lines). 
    (B) Fraxis: The diagonal lines constitute six equiangular lines in 3D space. 
    (C) FQS: The four positions and its opposite in the first term of the right hand 
    side constitute a regular cube in a hyperplane; the six positions in the second 
    term (outer blues) constitute a regular octahedron in a hyperplane. 
    The opposite of last six positions (inner greens) also constitute a regular 
    octahedron. 
    The yellow diagonals are apparently doubly overlapped due to the stereographic projection, but they are not actually overlapped. 
    (D) FQS ($N$=12): The 24-cell polytope in the 4-dimensional space, which achieves $C(A)=1$ as a redundant parameter configurations.
    }
    \label{fig:optimalMCs}
\end{figure*}

In this paper, we develop the theory for determining the optimal parameter configuration. 
As a preliminary result, in Sec.~\ref{sec:derivation}, we prove that, if the exact 
expectation values are available without any statistical error, then we have analytical 
solution of the best parameters achieving ${\rm min}\langle H \rangle$ (almost) without 
respect to the parameter configuration for every method of 
\cite{nakanishi2020,ostaszewski2021,watanabe2021,wada2022,wada2022full}. 
Then, in Sec.~\ref{sec:C-cost}, we give the most essential result providing the basis 
of the theory; that is, we derive the explicit form of the fluctuation of 
${\rm min}\langle H \rangle$ under statistical errors, with respect to the parameter 
configuration. 
This enables us to introduce the {\it C-cost} (configuration cost/overhead), a useful metric for 
determining ${\rm min}\langle H \rangle$ and thereby providing us with the optimal 
parameter configuration. 
Actually, Sec.~\ref{sec:experiments} gives numerical experiments to demonstrate that 
the optimal parameter configurations obtained using C-cost yield the best result in 
the sense of the statistical error of estimating $\langle H \rangle$.

Notably, beyond such utilization for numerically determining the configuration, the 
C-cost satisfies several interesting mathematical properties, suggesting the relevance 
of this metric. 
The first is that the lower bound of C-cost is 1; moreover, we prove that, for the 
minimum size of the parameter set, this bound is achievable if and only if the parameter 
configuration satisfies a geometric condition called the {\it equiangular line condition}, 
an important and beautiful mathematical concept in algebraic graph theory. Here, each parameter $\bm{q}$ corresponds to a line that passes the origin and $\bm{q}$.
This condition rigorously supports our above-described intuition that it would be desirable 
for the parameters to be equally spaced for the Rotosolve case shown in 
Fig.~\ref{fig:figschem} or Fig.~\ref{fig:optimalMCs}A; 
this intuition holds for the case of Fraxis, showing that there is a unique parameter 
configuration (up to the global rotation) satisfying the equiangular line condition, 
as displayed in Fig.~\ref{fig:optimalMCs}B. 
But interestingly, this intuition does not apply to the most general FQS case due to the 
non-existence of 10 equiangular lines in $\mathbb{R}^4$. 
That is, the so-called Gerzon bounds~\cite{LemmensSeidel73}, Neumann theorem~\cite{LinYu2020} 
and Haantjes bound~\cite{haantjes1948equilateral} prove that 
there does not exist a set of 10 lines satisfying the equiangular line condition in $\mathbb{R}^4$; the maximum number of such lines is 6. 
Nevertheless, the C-cost is still useful in this case, as it gives us a means to numerically obtain 
the optimal parameter configuration, which is displayed in Fig.~\ref{fig:optimalMCs}C. 
Furthermore, if redundant measurements are allowed, there exists 
parameter configurations that achieves the theoretical lower bound of the C-cost, 
one of which is illustrated in Fig.~\ref{fig:optimalMCs}D.

Finally, we note that equiangular lines in complex spaces are equivalent to symmetric, informationally complete (SIC) POVMs~\cite{Renes2004} whose properties have been much studied, e.g., it is conjectured that there is always a set of $d^2$ equiangular lines in $\mathbb{C}^d$~\cite{Scott2010} (it has been proven up to some large $d$ theoretically and numerically). The SIC POVMs defined from such lines are \textit{informationally complete} because the results of other measurements can be computed from those of the SIC POVMs. In this study, we obtain similar results connecting equiangular lines in real spaces with the variational quantum circuits using parametrized single-qubit gates.


\section{Energy minimization with matrix factorization} 
\label{sec:derivation}

\subsection{Brief review of Rotosolve, Fraxis, and FQS}

FQS method~\cite{wada2022full} describes the procedure to completely characterize the energy landscape with respect to a single-qubit gate in a PQC.
The parametrized single-qubit gate, which we call {\it FQS gate}, is none other than the general single-qubit gate $U^{(4)}$ expressed as~\cite{wada2022full,wharton2015unit}
\begin{equation}
\label{eq:gengate}
    U^{(4)}(\bm{q}) = wI - xiX - yiY - ziZ =\bm{q}\cdot \vec{\varsigma},
\end{equation}
where the superscipt indicates the number of parameters: $\bm{q}=(w,x,y,z)^T\in \mathbb{R}^4$ satisfying $\|\bm{q}\|^2=1$.
Here, $i$ is the imaginary unit, $I$ is the 2$\times$2 identity matrix, and $X, Y, Z$ are the Pauli matrices.
$\vec{\varsigma}=(\varsigma_I,\varsigma_X,\varsigma_Y,\varsigma_Z)^T$ denotes an extension of the Pauli matrices defined as
\begin{align}
    \vec{\varsigma}=(I,-iX,-iY,-iZ)^T.
\end{align}
The dimension of the parameter $\bm{q}$ is four, but since the parameter $\bm{q}$ is constrained on the unit hyper-sphere, the degree of freedom of the parameter is three.


In Fraxis, the rotation angle is constrained to $\pi$, which corresponds to the case 
$w=0$ of Eq.~\eqref{eq:gengate} as
\begin{equation}
    U^{(3)}(\bm{n}) = -xiX -yiY - ziZ,
\end{equation}
where the parameter of the gate is $\bm{n}=(x,y,z)^T$ such that $\|\bm{n}\|^2=1$. 
We term this $U^{(3)}$ as {\it Fraxis gate}. 
Thus, the Fraxis gate has two degrees of freedom.

In Rotosolve, the rotation axis is fixed and the rotation angle serves as the parameter.
In particular, {\it Rx gate} fixes the rotation axis to the $x$-axis; in the form 
of Eq.~\eqref{eq:gengate}, this corresponds to $y=z=0$ and thus
\begin{equation}
    \label{eq:gengateRx}
    U^{(2)}(\bm{r}) = wI - xiX,
\end{equation}
where the parameter of the gate is $\bm{r}=(w,x)^T$ such that $\|\bm{r}\|^2=1$.
Thus, the degree of freedom of Rx gate is one. 
Similarly, Ry and Rz gates are obtained by replacing $X$ in Eq.~\eqref{eq:gengateRx} 
with $Y$ and $Z$, respectively.

In what follows we use the most general FQS gate to describe the optimization 
algorithm. 
The sequential optimization method takes the strategy to update respective FQS gates 
in a coordinate-wise manner, where all parameters are fixed except for the focused FQS 
gate $U^{(4)}({\bm q})$. 
The entire quantum circuit containing FQS gates is supposed to be the PQC 
$V=\prod_i U^{(4)}_i({\bm q}_i) W_i $ on the $n$-qubit system, where $U^{(4)}_i$ is the $i$th FQS  gate and $W_i$ is a fixed multi-qubit gate.

Now, let $V_1$ and $V_2$ be the gates placed before and after the focused FQS gate $U^{(4)}({\bm q})$.
Then, a density matrix $\rho$ prepared by the PQC is expressed as 
\begin{equation}
    \rho = V_{2} U^{(4)}(\bm{q}) V_{1} \rho_{\rm in} V_{1}^\dag \left(U^{(4)}(\bm{q}) \right)^\dag V_{2}^\dag,
\end{equation}
where $\rho_{\rm in}$ is an input density matrix.
Thus, the expectation value $\langle H \rangle$ of given Hamiltonian $H$ with respect 
to $\rho$ is then 
\begin{eqnarray}
    \label{eq:eofham}
    \langle H \rangle 
    &=& \Tr{HV_{2} U^{(4)}(\bm{q}) V_{1}\rho_{\rm in} V_{1}^\dag \left(U^{(4)}(\bm{q})\right)^\dag V_{2}^\dag} \notag \\
    &=&  \Tr{H' U^{(4)}(\bm{q}) \rho'_{\rm in} \left(U^{(4)}(\bm{q})\right)^\dag},
\end{eqnarray}
where 
$H' = V_{2}^\dag H V_{2}$ and $\rho'_{\rm in} = V_1 \rho_{\rm in} V_1^\dag$. 
Substituting Eq.~\eqref{eq:gengate} into Eq.~\eqref{eq:eofham} yields
\begin{align}\label{eq:hmat}
    \braket{H}=\bm{q}^T G^{(4)}\bm{q},
\end{align}
where $G^{(4)}$ is a $4\times 4$ real-symmetric matrix:
\begin{equation}\label{eq:Gmat}
    G^{(4)} = \begin{bmatrix}
    G_{II} & G_{IX} & G_{IY} & G_{IZ} \\
    G_{IX} & G_{XX} & G_{XY} & G_{XZ} \\
    G_{IY} & G_{XY} & G_{YY} & G_{YZ} \\
    G_{IZ} & G_{XZ} & G_{YZ} & G_{ZZ}
    \end{bmatrix},
\end{equation}
and each element, $G_{\mu\nu}~(\mu,\nu=I,X,Y,Z)$, is defined by
\begin{align}
    G_{\mu\nu}=\frac{1}{2}\Tr{\rho'_{\rm in}\left(\varsigma_{\mu}^\dagger H'\varsigma_{\nu}+\varsigma_{\nu}^\dagger H'\varsigma_{\mu}\right)}.
\end{align}
Thus the energy landscape with respect to the FQS gate is completely characterized by the matrix $G^{(4)}$.
Because Eq.~\eqref{eq:hmat} is a quadratic form with respect to the parameter $\bm{q}$ 
with the constraint $\|\bm{q}\|^2=1$, the eigenvector 
$\bm{p}_1$ associated with the lowest eigenvalue $\lambda_1$ of the matrix $G^{(4)}$ minimizes the energy \eqref{eq:hmat}; see Appendix~{\ref{apdx:FQS}} for the details.
In the following, we call the matrix $G^{(4)}$ \textit{FQS matrix}. 
Note that the above result can be directly extended to the case of Fraxis and Rotosolve, in which case Eq.~\eqref{eq:Gmat} is replaced by 
\begin{equation}\label{eq:gfraxis}
    G^{(3)} = \begin{bmatrix}
    G_{XX} & G_{XY} & G_{XZ} \\
    G_{XY} & G_{YY} & G_{YZ} \\
    G_{XZ} & G_{YZ} & G_{ZZ}
    \end{bmatrix},
\end{equation}
and
\begin{equation}\label{eq:groto}
    G^{(2)} = \begin{bmatrix}
    G_{II} & G_{IX}\\
    G_{IX} & G_{XX}
    \end{bmatrix},
\end{equation}
respectively.

\subsection{FQS with arbitrary parameter configurations}
\label{subsec:ourproposal}

Since $G^{(4)}$ is a real-symmetric matrix, we can expand Eq.~\eqref{eq:hmat} as the following form:
\begin{align}
\label{eq:htenkai}
\begin{split}
    \langle H \rangle  
    &= G_{II} w^2
    + G_{XX} x^2
    + G_{YY} y^2
    + G_{ZZ} z^2 \\
    &+ 2 G_{IX} wx
    + 2 G_{IY} wy
    + 2 G_{IZ} wz \\
    &+ 2 G_{XY} xy
    + 2 G_{XZ} xz
    + 2 G_{YZ} yz.
\end{split}
\end{align}
Eq.~\eqref{eq:htenkai} indicates that, if we know all the 10 coefficients  
$(G_{II},...,G_{YZ})$, we can exactly estimate the expectation $\langle H \rangle $ 
for any parameter $\bm{q}$. 
In other words, only algebraic calculations on classical computers are required to find the parameters achieving the minimum expectation value for the target gate.

Therefore, it is important to obtain the coefficients with as few measurements as 
possible. 
To consider this problem, we define the function $h^{(4)}(\bm{q})$ that outputs the 
normalized vector ($\|\bm{h}^{(4)}(\bm{q})\|=1$): 
\begin{align}
    \label{eq:hfunc}
    &\bm{h}^{(4)}(\bm{q}) \notag\\
    &~=\!
    (w^2\!\!, x^2\!\!, y^2\!\!, z^2\!\!, \!\sqrt{\!2}wx, \!\sqrt{\!2}wy, \!\sqrt{\!2}wz, \!\sqrt{\!2}xy, \!\sqrt{\!2}xz, \!\sqrt{\!2}yz)^{\!T} \!\!\!,
\end{align}
and the vector $\bm{g}^{(4)}$
\begin{align}
    \label{eq:gvec}
    \bm{g}^{(4)}&=(
    G_{II},G_{XX},G_{YY},G_{ZZ}, \notag \\&
    \!\sqrt{\!2}G_{IX}, 
    \!\sqrt{\!2}G_{IY},
    \!\sqrt{\!2}G_{IZ},
    \!\sqrt{\!2}G_{XY},
    \!\sqrt{\!2}G_{XZ},
    \!\sqrt{\!2}G_{YZ})^T.
\end{align}
Then, the relation between the parameter $\bm{q}$ and the expectation 
$\langle H \rangle$ is expressed as 
\begin{equation}
    \label{eq:H_hq}
    \langle H \rangle = \bm{h}^{(4)}(\bm{q})^T ~ \bm{g}^{(4)}.
\end{equation}

Suppose measurements with different parameters $\{\bm{q}_1, ..., \bm{q}_N\}$ and the $N$ expectation values of the measurement results $\bm{b}=(b_1, ..., b_N)^T$ were obtained,
we can also write the relations between the expectation values $\bm{b}$ and the coefficient vector $\bm{g}^{(4)}$ as 
\begin{equation}
    \label{eq:bAg}
    \bm{b} = A^{(4)}\bm{g}^{(4)},
\end{equation}
where the matrix $A^{(4)} \in \mathbb{R}^{N \times 10}$ is
\begin{equation} \label{eq:A-matrix}
    A^{(4)} = ( \bm{h}^{(4)}(\bm{q}_1), ..., \bm{h}^{(4)}(\bm{q}_N) )^T,
\end{equation}
that encodes the information of the parameter configurations $\{\bm{q}_1, ..., \bm{q}_N\}$.

It is obvious, if $N<10$, $\bm{g}^{(4)}$ is not uniquely determined.
Hence, we suppose $N\geq 10$ throughout this paper.
If ${\rm rank}(A)=10$, $A^TA$ is invertible and there exists the generalized inverse of  $A^+:=(A^TA)^{-1}A^T$  \cite{penrose1955generalized}. 
Accordingly, we can obtain the vector $\bm{g}^{(4)}$ by exactly solving linear equations as
\begin{equation}
    \label{eq:gAinvb}
    \bm{g}^{(4)} = A^{+}\bm{b}.
\end{equation}
In other words, a single execution of FQS requires at least ten sets of the parameters and the corresponding observables.
However, it may not necessarily be the case when input states and/or Hamiltonian has symmetry, which reduces the number of required measurements to construct $G^{(4)}$ in Eq.~(\ref{eq:Gmat}).
We also note that it is possible that ${\rm rank}(A)<10$ if the rows of $A$ are dependent on each other.
However, it is plausible to exclude such situation, because the input parameters are controllable.
Hereafter, we suppose that all columns of $A$ are independent of each other, equivalently, ${\rm rank}(A)=10$. 

The same argument is applicable to the Fraxis gate as
\begin{eqnarray}
    \langle H \rangle  
    &\!\!=\!\!& G_{XX} x^2
    + G_{YY} y^2
    + G_{ZZ} z^2 \nonumber \\
    && + 2 G_{XY} xy
    + 2 G_{XZ} xz
    + 2 G_{YZ} yz,
 \\
    \bm{h}^{(3)}(\bm{n})
    &\!\!=\!\!& (x^2, y^2, z^2, \!\sqrt{\!2}xy, \!\sqrt{\!2}xz, \!\sqrt{\!2}yz)^T, \\
    \bm{g}^{(3)} &\!\!=\!\!& (G_{XX},G_{YY}, G_{ZZ},\notag\\
    &&~~~~~\!\sqrt{\!2}G_{XY}, \!\sqrt{\!2}G_{XZ}, \!\sqrt{\!2}G_{YZ})^T. \ \ \ \ \ 
\end{eqnarray}
Likewise, for Rx gates
\begin{eqnarray}
    \langle H \rangle  
    &=& G_{II} w^2
    + G_{XX} x^2
    + 2 G_{IX} wx, \\
    \bm{h}^{(2)}(\bm{r})
    &=& (w^2, x^2, \!\sqrt{\!2}wx)^T,\\
    \bm{g}^{(2)}
    &=& (G_{II}, G_{XX}, \!\sqrt{\!2}G_{IX})^T.
\end{eqnarray} 
The minimum sizes of the parameter configuration required to construct $G^{(d)}$ are $d(d+1)/2$, i.e., 6 in Fraxis ($d=3$) and 3 in Rotosolve ($d=2$).
For simplicity, we omit superscript $d$ from $G^{(d)}$, $h^{(d)}$, and $\bm{g}^{(d)}$ for $d=2,3,4$ in the following sections and formulate them based on the FQS framework unless otherwise noted.

\section{Configuration cost with finite runs of quantum circuits}\label{sec:C-cost}

\subsection{Evaluation of the Parameter Configurations}\label{sec:param_config}
If infinite number of measurements were allowed, there would be no estimation errors in the expectation values $\bm{b}$, and the resulting vector $\bm{g}$ is exactly obtained as long as the matrix $A$ is invertible. This allows for the exact evaluation of the optimal solution of the FQS matrix.
In this section, we quantitatively evaluate the error propagation from the shot noise in the expectation values $\bm{b}$ to the estimation of the minimum solution. 
Although we focus on the FQS for generality, it can be easily applied to other sequential quantum optimizers, Rotosolve and Fraxis.  
Suppose a FQS matrix is estimated from $N$ expectation values of an observable, which are obtained by independent measurements with different parameters $\{\bm{q}_1, ..., \bm{q}_N\}$ assigned to the gate of interest.
Due to the finite number of shots, the expectation values are no longer deterministic, but randomly distribute around the true values $\bm{b}^*$ obtained with infinite shots as 
\begin{equation}
    \label{eq:bpluse}
    \bm{b} = \bm{b}^* + \bm{\epsilon},
\end{equation}
where $\bm{\epsilon}$ is the random variables reflecting the errors on the measurements.

Note that the relation between $\bm{b}$ and  $\bm{g}$ is no longer valid under the finite measurement condition.
Alternatively, we employed the least-square solution $\bm{g}$
\begin{align}
    \label{eq:glsq}
    \bm{g} = \underset{\tilde{\bm{g}}}{\arg\min} \|\bm{b}-A\tilde{\bm{g}}\|^2 = (A^TA)^{-1}A^T \bm{b} = A^+\bm{b},
\end{align}
as a plausible estimate of $\bm{g}^*$.
Apparently, Eq.~\eqref{eq:glsq} has the same form as Eq.~\eqref{eq:gAinvb}, but the resulting vector $\bm{g}$ is an estimate of the true vector $\bm{g}^*$ in the context of maximum likelihood~\cite{don1985use} and deviates from the ideal vector $\bm{g}^*$ due to errors for finite measurement.
Substituting Eq.~\eqref{eq:bpluse} into Eq.~\eqref{eq:glsq}, we get
\begin{align}
    \label{eq:g_ae}
\bm{g}&= A^+ \bm{b}\notag \\
&=A^+(\bm{b}^*+\bm{\epsilon}) \notag \\
    &=\bm{g}^{*} + A^+\bm{\epsilon},
\end{align}
where the third equality follows $\bm{g}^{*}=A^{+}\bm{b}^{*}$. 
Eq.~\eqref{eq:g_ae} implies the errors of the estimated coefficient vector $\bm{g}-\bm{g}^*=A^+\bm{\epsilon}$ is amplified by the linear transformation $A^+$ from the shot errors $\bm{\epsilon}$.

Let $G$ be a FQS matrix generated from the estimated vector $\bm{g}$ with finite number of measurements. 
In the below, we focus on the FQS procedure to estimate the minimum eigenvalue of $G$.
Here, for convenience, we define the half-vectorization function $\mathrm{vech}: {\mathbb R}^{4\times4} \rightarrow {\mathbb R^{10}}$
such that
\begin{align}
    &\mathrm{vech}(G)~\!\!=\!\! \notag \\
    & \ \ (
    G_{\!I\!I}\!,
    G_{\!X\!X}\!,
    G_{\!Y\!Y}\!,
    G_{\!Z\!Z}\!,
    G_{\!I\!X}\!, 
    G_{\!I\!Y}\!,
    G_{\!I\!Z}\!,
    G_{\!X\!Y}\!,
    G_{\!X\!Z}\!,
    G_{\!Y\!Z})^T,
\end{align}
where the order of elements corresponds to $\bm{g}$.
In addition, the scaling matrix $D$ is defined as
\begin{align}
    D = \mathrm{diag}(1,1,1,1,\sqrt{\!2},\sqrt{\!2},\sqrt{\!2},\sqrt{\!2},\sqrt{\!2},\sqrt{\!2}).
\end{align}
Using these notations,
we have the following relations,
\begin{equation}
\bm{g}=D~\mathrm{vech}(G)
\Leftrightarrow 
G=\mathrm{vech}^{-1}(D^{-1}\bm{g}),
\end{equation}
where the function $\mathrm{vech}^{-1}$ is a linear mapping as $\mathrm{vech}^{-1}(\bm{s}+\bm{t})=\mathrm{vech}^{-1}(\bm{s})+\mathrm{vech}^{-1}(\bm{t})$ for $\bm{s,t}\in \mathbb{R}^{10}$.
Accordingly, $G$ is expressed as 
\begin{equation}
    \label{eq:Gerr}
    G = \mathrm{vech}^{-1}(D^{-1}\bm{g}) 
    = G^* +\mathrm{vech}^{-1}(D^{-1}A^+\bm{\epsilon}),
\end{equation}
which implies that the  ideal FQS matrix $G^*={\rm vech}^{-1}(D^{-1}A^{+}\bm{b}^*)$ is perturbed by $\mathrm{vech}^{-1}(D^{-1}A^+\bm{\epsilon})$.

In the following part, we quantitatively evaluate the matrix perturbation effect on the optimization result.
Let $\lambda_i^*$ and ${\bm p}_i^*$ be the $i$th lowest eigenvalue and the corresponding eigenvector of $G^*$. 
Likewise, $\lambda_i(\bm{\epsilon})$ and $\bm{p}_i(\bm{\epsilon})$ are the $i$th lowest eigenvalue and its corresponding eigenvector of the estimated matrix $G$.
For quantitative evaluation of the perturbation, we suppose two metrics: (1) $\mathrm{Var}[\lambda_1(\bm{\epsilon})]$, the variance of the estimated minimum value, and (2) $\mathbb{E}[\Delta E]$, the mean error of the minimum expectation value using the estimated optimal parameters with infinite shot.
Here, $\Delta E$ is the deviation of the expectation value with the estimated parameter set $\bm{p}_1$ from the true minimum expectation value, defined as
\begin{equation}
\label{eq:EnergyDeviation}
    \Delta E = \bm{p_1}^{T} G^*  \bm{p}_1^{} - \bm{p_1}^{*T} G^*  \bm{p_1}^{*} \geq 0,
\end{equation}
where the positivity of $\Delta E$ comes from the fact that the true parameter set ${\bm p}^*_1$ gives the minimum value of the quadratic form.
We suppose that $\mathrm{Var}[\lambda_1(\bm{\epsilon})]$ is a measure to verify the estimated energy $\lambda_1$ by one-time execution of FQS, while $\mathbb{E}[\Delta E]$ is a measure to qualify the estimated parameter ${\bm{p}_1}$.
Throughout the following parts, for simplicity, we employed $\mathrm{Var}[\lambda_1]$ as the indicator of shot errors.
(See Appendix~\ref{apdx:alternative_measure} for $\mathbb{E}[\Delta E ]$)

Since $G$ is a $4\times 4$ symmetric matrix, it is represented by eigendecomposition as
\begin{equation}
    \label{eq:Gmodel}
    G=P \Lambda P^T,
\end{equation}
where $P=(\bm{p}_1,...,\bm{p}_4)^T$ and $\Lambda = \mathrm{diag}(\lambda_1,...,\lambda_4)$. 
From the first-order perturbation theory \cite{kato2013perturbation}, the minimum eigenvalue $\lambda_1$ of $G$ is approximated as
\begin{equation}
    \lambda_1 = \lambda_1^* + \bm{p}_1^{*T} \mathrm{vech}^{-1}(D^{-1}A^+\bm{\epsilon}) \bm{p}^*_1.
\end{equation}
Then, $\mathrm{Var}[\lambda_1]$ is evaluated as 
\begin{equation}
\label{eq:score_var}
    \mathrm{Var}[\lambda_1] = \mathrm{Var}[\bm{p}_1^{*T} \mathrm{vech}^{-1}(D^{-1}A^+\bm{\epsilon}) \bm{p}^{*}_1].
\end{equation}

To deal with Eq.~\eqref{eq:score_var}, we apply a simple model to the measurement errors $\bm{\epsilon}$ satisfying as
\begin{equation}
\mathbb{E}[\bm{\epsilon}]=\bm{0},
\end{equation}
\begin{equation}
    \mathbb{E} [\epsilon_i \epsilon_j] =
    \left\{
    \begin{array}{ll}
    0 & \mathrm{for}~i\neq j\\
    \sigma^2/s & \mathrm{for}~i=j
    \end{array}
    \right. ,
\end{equation}
where $s$ denotes the number of measurement shots to evaluate an expectation value of observables and $\sigma^2$ is a part to specific to observables. 

In addition, we assume the first eigenvector $\bm{p}_1$ follows a uniform distribution on the unit sphere.
Based on the models, Eq.~\eqref{eq:score_var} can be further calculated as
\begin{equation}
\label{eq:score_ksk}
    \mathrm{Var}[\lambda_1] = \frac{\sigma^2}{sd(d+2)}
    {\rm Tr}[ (A^TA)^{-1}(\bm{1}_d\bm{1}_d^T+2I)],
\end{equation}
where $d=\mathrm{dim}(\bm{q})$ (4 for FQS, 3 for Fraxis and 2 for Rx) and $\bm{1}_d\in \mathbb{R}^{d(d+1)/2}$ is the vector that the first $d$ elements are unity and the others are zero (e.g. $\bm{1}_d=(1,1,1,1,0,0,0,0,0,0)^T$ for FQS).
Derivation of Eq.~\eqref{eq:score_ksk} is detailed in Appendix~\ref{apdx:analytical_form}.

Since we focus on the optimization performance,
it is convenient to discuss the total number of shots required for an one-time optimization rather than the cost for evaluating an expectation value. 
Suppose the total shots for an one-time optimization is constant.
Let ${s_\mathrm{min}}$ be the number of measurement shots to estimate an expectation value of the observable when $N=N_\mathrm{min}$, where $N_\mathrm{min}:=d(d+1)/2$ is the minimum size of the parameter configuration.
For a redundant parameter configuration $N>N_\mathrm{min}$, the number of shots for evaluating an expectation value is $s_{\mathrm{min}}N_{\mathrm{min}}/N$.
As a result,
\begin{eqnarray}
\label{eq:score_ksk2}
    \mathrm{Var}[\lambda_1] 
    = \frac{\sigma^2}{s_\mathrm{min}} C(A),
\end{eqnarray}
where we define the C-cost (Configuration cost), $C(A)$, as
\begin{equation} \label{eq:ecost}
    C(A):= \frac{N}{N_{\mathrm{min}}d(d+2)}
    {\rm Tr}[ (A^TA)^{-1}(\bm{1}_d\bm{1}_d^T+2I)].
\end{equation}

 Equation~\eqref{eq:score_ksk2} indicates that $\mathrm{Var}[\lambda_1]$ is separable into the number of shots ($s_\mathrm{min})$ dependent part and the parameter configuration dependent part i.e. a 50\% reduction of $C(A)$ is equivalent to doubling the number of shots.
The C-cost is a metric to estimate $\mathrm{Var}[\lambda_1]$ under the condition that the number of shots to optimize a single-qubit gate is constant.

Now, the conditions for the minimum $C(A)$ are of interest to minimize the estimation error. We rigorously give the lower bound of the C-cost as the following theorem (See Appendix~\ref{sec:ecost_theorems} for the proof of this theorem):
\begin{theomain}\label{theorem:extended_C-cost}
For the C-cost $C(A)$ in Eq.~\eqref{eq:ecost}, $C(A) \geq 1$ holds with equality if and only if the parameter configurations $\{ \bm{q}_i \}_{i=1}^{N}$ satisfy 
    \begin{equation} \label{eq:condOfMinExEcost}
        A^TA = \frac{N}{d(d+2)} (\bm{1}_d\bm{1}_d^T + 2I).
    \end{equation}
\end{theomain}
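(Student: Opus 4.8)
The plan is to strip off the positive-definite weight $B := \bm{1}_d\bm{1}_d^T + 2I$ by a symmetric similarity transformation and then invoke the scalar Cauchy--Schwarz (equivalently, arithmetic--harmonic mean) inequality, using only two elementary identities that every admissible $A$ satisfies because the underlying parameters lie on the unit sphere.

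First I would record those identities. Writing the rows of $A$ as $\bm{h}(\bm{q}_i)^T$, the constraint $\|\bm{q}_i\|^2 = 1$ gives $\|\bm{h}(\bm{q}_i)\|^2 = (w_i^2 + x_i^2 + y_i^2 + z_i^2)^2 = 1$ (and the analogous statement for Fraxis and Rotosolve), so $\Tr{A^TA} = \sum_i \|\bm{h}(\bm{q}_i)\|^2 = N$; moreover the sum of the first $d$ entries of $\bm{h}(\bm{q}_i)$ equals $\|\bm{q}_i\|^2 = 1$, hence $A\bm{1}_d = (1,\dots,1)^T\in\mathbb{R}^N$ and $\bm{1}_d^T (A^TA)\,\bm{1}_d = N$. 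Set $M := A^TA$, which is $m\times m$ with $m = N_{\mathrm{min}} = d(d+1)/2$ and is invertible by the standing assumption $\mathrm{rank}(A) = m$. Since $B = 2I + \bm{1}_d\bm{1}_d^T$ is positive definite, Sherman--Morrison gives $B^{-1} = \tfrac12 I - \tfrac{1}{2(d+2)}\bm{1}_d\bm{1}_d^T$, and combining this with the two identities yields $\Tr{M B^{-1}} = \tfrac12\Tr{M} - \tfrac{1}{2(d+2)}\bm{1}_d^T M\bm{1}_d = \tfrac{N(d+1)}{2(d+2)}$.

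Next I would apply the key inequality: for positive-definite $m\times m$ matrices $M$ and $B$,
\[
\Tr{M^{-1}B}\,\Tr{MB^{-1}} \ \geq\ m^2,
\]
with equality if and only if $\tilde M := B^{-1/2} M B^{-1/2}$ is a scalar multiple of the identity, i.e. $M \propto B$. This is just Cauchy--Schwarz applied to the positive eigenvalues of $\tilde M$, using $\Tr{M^{-1}B} = \Tr{\tilde M^{-1}}$ and $\Tr{MB^{-1}} = \Tr{\tilde M}$. Substituting the value of $\Tr{MB^{-1}}$ found above gives $\Tr{M^{-1}B} \geq d^2(d+1)(d+2)/(2N)$, and plugging this into $C(A) = \tfrac{N}{N_{\mathrm{min}}\,d(d+2)}\Tr{M^{-1}B}$ with $N_{\mathrm{min}} = d(d+1)/2$ collapses the bound to exactly $C(A) \geq 1$. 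For the equality characterization, $C(A) = 1$ forces equality in the displayed inequality, so $M = cB$ for some $c > 0$; taking traces and using $\Tr{M} = N$ and $\Tr{B} = d(d+2)$ pins $c = N/(d(d+2))$, which is exactly Eq.~\eqref{eq:condOfMinExEcost}. Conversely, substituting $A^TA = \tfrac{N}{d(d+2)}B$ directly into $C(A)$ returns $1$.

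Once the setup is fixed the computation is routine; the content of the argument is twofold. One point is the realization that, although $A^TA$ is in principle heavily constrained by the geometry of the quadratic map $\bm{q}\mapsto\bm{h}(\bm{q})$, only the two linear facts $\Tr{A^TA}=N$ and $\bm{1}_d^T(A^TA)\bm{1}_d=N$ --- both immediate from $\|\bm{q}\|=1$ --- are actually needed for the bound. The other is the idea of removing the weight $B$ via the symmetric transform $M\mapsto B^{-1/2}MB^{-1/2}$, which turns the $B$-weighted trace ratio into an ordinary eigenvalue ratio with a transparent Cauchy--Schwarz equality case; the clean cancellation to $1$ then reflects how $N_{\mathrm{min}}=d(d+1)/2$ matches the spectrum $\{d+2,2,\dots,2\}$ of $B$. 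I anticipate no deeper obstacle --- in particular, whether Eq.~\eqref{eq:condOfMinExEcost} is itself achievable (and its link to equiangular lines) is a separate issue and is not needed for this theorem.
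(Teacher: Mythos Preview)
Your argument is correct and follows essentially the same route as the paper: compute $B^{-1}$ via Sherman--Morrison, use the two trace identities $\Tr{A^TA}=N$ and $\bm 1_d^T A^TA\,\bm 1_d=N$ coming from $\|\bm q_i\|=1$, and then apply the arithmetic--harmonic mean inequality $\Tr{X}\Tr{X^{-1}}\ge m^2$ with equality iff $X\propto I$. The one cosmetic difference is that you work with the symmetric $\tilde M=B^{-1/2}MB^{-1/2}$, whereas the paper applies its Lemma~\ref{lemma:trace_and_inverse} to the non-symmetric $B^{-1}M$; your symmetrized version is slightly cleaner since the lemma is stated for symmetric positive-definite matrices, but the underlying eigenvalue inequality and the resulting equality characterization are identical.
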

In other words, the parameter configurations that satisfies Eq.~\eqref{eq:condOfMinExEcost} is optimal with respect to efficiency.
Although it may not be straightforward to find the optimal parameter sets that satisfy Eq.~\eqref{eq:condOfMinExEcost},  in the case of minimum parameter set ($N=N_\mathrm{min}$) a useful formula is available as the following corollary of Theorem~\ref{theorem:extended_C-cost}.
(See Appendix~\ref{sec:ecost_theorems} for the proof.)

    \begin{corollary} \label{thm:equiangular}
    For the minimum number of parameters $(N=N_\mathrm{min})$, the C-cost $C(A)$ in Eq.~\eqref{eq:ecost} is always $C(A) \geq 1$ with equality if and only if the parameter configurations $\{\bm{q}_i\}_{i=1}^{N}$ satisfy
    \begin{equation}
        |\bm{q}_i\cdot\bm{q}_{j}| = \frac{1}{\sqrt{d+2}}
         ~ (\mathrm{for ~all} ~ i\neq j).
    \end{equation}
    \end{corollary}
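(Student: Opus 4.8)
The plan is to derive the corollary from Theorem~\ref{theorem:extended_C-cost} by specializing the optimality condition \eqref{eq:condOfMinExEcost} to the case $N=N_{\mathrm{min}}=d(d+1)/2$ and then translating the matrix identity $A^TA = \frac{N}{d(d+2)}(\bm{1}_d\bm{1}_d^T+2I)$ into a statement about the pairwise inner products $\bm{q}_i\cdot\bm{q}_j$. The key observation is that when $N=N_{\mathrm{min}}$, the matrix $A$ is square of size $d(d+1)/2 \times d(d+1)/2$, so $A^TA$ and $AA^T$ have the same spectrum; in particular $A^TA$ being the fixed positive-definite matrix on the right-hand side of \eqref{eq:condOfMinExEcost} is equivalent (up to the same invertibility we have already assumed) to a condition on $AA^T$, whose entries are exactly the Gram-type quantities $\bm{h}(\bm{q}_i)^T\bm{h}(\bm{q}_j)$.

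The central computational step is therefore to evaluate $\bm{h}(\bm{q}_i)^T\bm{h}(\bm{q}_j)$ in closed form. Writing $\bm{h}(\bm{q})$ as in \eqref{eq:hfunc}, a direct expansion shows $\bm{h}(\bm{q}_i)^T\bm{h}(\bm{q}_j) = (\bm{q}_i\cdot\bm{q}_j)^2$ for unit vectors $\bm{q}_i,\bm{q}_j$ — the squares of the coordinates pair with the squares, the $\sqrt{2}$-weighted cross terms reproduce the remaining terms in the binomial expansion of $(\sum_k q_{i,k}q_{j,k})^2$, and the normalization $\|\bm{q}\|=1$ makes the diagonal entries equal to $1$. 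I would state and prove this little lemma first, since it is the bridge between the abstract matrix condition and the geometric equiangular-line statement. Hence $AA^T = M$ where $M_{ij}=(\bm{q}_i\cdot\bm{q}_j)^2$.

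Next I would show that the optimality condition forces $M$ to be a specific matrix. From \eqref{eq:condOfMinExEcost}, $A^TA$ has eigenvalues $\frac{N}{d(d+2)}\cdot 3$ (multiplicity $1$, eigenvector $\bm{1}_d$) and $\frac{N}{d(d+2)}\cdot 2$ (multiplicity $d(d+1)/2-1$), because $\bm{1}_d\bm{1}_d^T$ has rank one with nonzero eigenvalue $d$ on the relevant $d(d+1)/2$-dimensional space. Since $AA^T$ has the same eigenvalues, $\mathrm{Tr}(AA^T)=\sum_i M_{ii}=\sum_i \|\bm{q}_i\|^4 = N$, which is automatic, and $\mathrm{Tr}((AA^T)^2)=\frac{N^2}{d^2(d+2)^2}(9+2(d(d+1)/2-1))=\frac{N^2}{d^2(d+2)^2}(d^2+d+7)$. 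On the other hand $\mathrm{Tr}(M^2)=\sum_{i,j}(\bm{q}_i\cdot\bm{q}_j)^4 = N + \sum_{i\neq j}(\bm{q}_i\cdot\bm{q}_j)^4$. One also has the simpler trace identity from comparing $\mathrm{Tr}(A^TA)$ against $\mathrm{Tr}(M)$, and more usefully $\sum_{i\ne j}(\bm{q}_i\cdot\bm{q}_j)^2 = \mathrm{Tr}(M^2)-\mathrm{Tr}(\mathrm{diag}(M)^2)$ type relations; the cleanest route is to combine the first two moments of the off-diagonal entries of $M$ with a Cauchy–Schwarz / variance argument to conclude that all off-diagonal entries $(\bm{q}_i\cdot\bm{q}_j)^2$ are equal, hence equal to their common value $\alpha$, and then pin down $\alpha = 1/(d+2)$ from the eigenvalue bookkeeping. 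This gives $|\bm{q}_i\cdot\bm{q}_j| = 1/\sqrt{d+2}$ for all $i\neq j$. The converse direction is easier: if $M = (1-\alpha)I + \alpha J$ with $\alpha=1/(d+2)$ (here $J$ is the all-ones matrix), one checks this $M$ has exactly the spectrum computed above, so $A^TA$ (sharing that spectrum and being symmetric positive definite of the right size) together with the structural form of $A^TA$ — which is built from the same $\bm{h}$'s — matches \eqref{eq:condOfMinExEcost}; invoking Theorem~\ref{theorem:extended_C-cost} then yields $C(A)=1$.

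The main obstacle I anticipate is the converse and the "only if" sharpening in the last paragraph: knowing only the spectrum of $AA^T$ (equivalently of $A^TA$) is weaker than knowing $A^TA$ equals the explicit matrix in \eqref{eq:condOfMinExEcost}, so I must be careful that the equiangular condition on the $\bm{q}_i$ really does reconstruct $A^TA$ exactly and not merely up to an orthogonal conjugation. The resolution is that $A^TA$ is not an arbitrary matrix with that spectrum — its entries are fixed nonlinear functions (degree-two monomials) of the $\bm{q}_i$, and $\sum_i \bm{h}(\bm{q}_i)\bm{h}(\bm{q}_i)^T$ with equiangular $\bm{q}_i$ can be evaluated directly using the fact that for any fixed direction the relevant second moments average out by the symmetry forced by equiangularity; alternatively, since the map is a bijection between the relevant Gram data, uniqueness of $M=AA^T$ forces uniqueness of $A^TA$ once $A$ has full rank. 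I would make this rigorous by showing the equiangular condition determines $AA^T$ uniquely, then using $A^TA = A^T (AA^T)^{+} A^T \cdots$ — more simply, by noting $A$ full rank and square means $A^TA$ is similar to $AA^T$ via $A$ itself, and then verifying the explicit right-hand side of \eqref{eq:condOfMinExEcost} is the unique symmetric PD matrix consistent with both this similarity and the monomial structure. This bookkeeping is the delicate part; everything else is the short algebraic identity $\bm{h}(\bm{q}_i)^T\bm{h}(\bm{q}_j)=(\bm{q}_i\cdot\bm{q}_j)^2$ plus moment matching.
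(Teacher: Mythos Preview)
Your overall architecture matches the paper's: both proofs hinge on the identity $\bm{h}(\bm{q}_i)^T\bm{h}(\bm{q}_j)=(\bm{q}_i\cdot\bm{q}_j)^2$ and on the fact that for $N=N_{\mathrm{min}}$ the square matrix $A$ makes $A^TA$ and $AA^T$ isospectral. However, there is a genuine gap in your argument, and it is precisely the one you flagged as ``delicate.''

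The missing ingredient is the elementary but decisive relation
\[
\bm{h}(\bm{q})^T\bm{1}_d=\|\bm{q}\|^2=1,\qquad\text{equivalently}\qquad A\,\bm{1}_d=\bm{1}_N.
\]
The paper uses this in \emph{both} directions. In the forward direction it gives $A^T\bm{1}_N=A^TA\,\bm{1}_d=\tfrac{N}{d}\bm{1}_d$, so $AA^T\bm{1}_N=\tfrac{N}{d}\bm{1}_N$, identifying $\bm{1}_N/\sqrt{N}$ as the simple eigenvector of $AA^T$; combined with the two-eigenvalue structure (the paper's Lemma~2) this pins down $AA^T$ exactly and yields $(\bm{q}_i\cdot\bm{q}_j)^2=1/(d+2)$. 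In the converse direction, the same relation lets one show that $\bm{1}_d/\sqrt{d}$ is the simple eigenvector of $A^TA$ (via a Cauchy--Schwarz equality $\bm{1}_d^T(A^T\bm{1}_N)=\|\bm{1}_d\|\,\|A^T\bm{1}_N\|$), upgrading the spectral information to the exact equality $A^TA=\tfrac{N}{d(d+2)}(\bm{1}_d\bm{1}_d^T+2I)$. Your proposed fixes for the converse (``symmetry forced by equiangularity'', ``bijection between Gram data'', ``similarity via $A$ plus monomial structure'') do not supply this eigenvector information and would not close the gap as written.

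Your forward direction via ``first two moments plus Cauchy--Schwarz'' has the same problem: the second moment $\sum_{i\neq j}M_{ij}^2$ follows from $\mathrm{Tr}(M^2)$, but the first moment $\sum_{i\neq j}M_{ij}=\bm{1}_N^TM\bm{1}_N-N$ is not a spectral quantity and again needs the eigenvector identification above. (An alternative that \emph{does} avoid the first moment: from the two-eigenvalue form $M=(\lambda_1-\lambda_2)\bm{v}\bm{v}^T+\lambda_2 I$, the constraint $M_{ii}=1$ forces $v_i^2$ constant, and $M_{ij}=(\bm{q}_i\cdot\bm{q}_j)^2\ge 0$ forces all $v_i$ to share a sign; this gives the forward direction cleanly, but it is not the route you described.) Finally, note a small arithmetic slip: the simple eigenvalue of $\bm{1}_d\bm{1}_d^T+2I$ is $\|\bm{1}_d\|^2+2=d+2$, not $3$, so your trace formula $9+2(N_{\mathrm{min}}-1)$ should read $(d+2)^2+4(N_{\mathrm{min}}-1)$.
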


The equality condition in Corollary~\ref{thm:equiangular} tells us that the parameters must be equiangular unit vectors.
This equiangular property is known as {\it equiangular lines} in real spaces~\cite{LemmensSeidel73,lemmens1991equiangular,greaves2016equiangular,jiang2021equiangular}, which is equivalent to the algebraic graph theory of \textit{two-graphs}~\cite{godsil01}.  The existence of $N_{\mathrm{min}} = d(d+1)/2$ equiangular lines in $\mathbb{R}^d$ is known as the \textit{Gerzon bounds}, and so far only shown to hold for $d = 2, 3, 7, 23$. For our optimal parameter configurations, only the cases of Rx and Fraxis gates ($d=2, 3$), there exists a {\it unique} set of $N_\mathrm{min}$ equiangular unit vectors (up to rotation) and such parameter configuration {\it uniquely} achieves the minimum value of C-cost $C(A)$. The non-existence of such optimal parameter configuration for FQS gate ($d = 4$) is due to the non-existence of equiangular lines satisfying the condition of Corollary~\ref{thm:equiangular}, which is attributed to Haantjes~\cite{haantjes1948equilateral} and  Neumann in~\cite{LemmensSeidel73} (see also~\cite{LinYu2020}).

\subsection{The Rotation Invariance of C-cost.}~\label{subsec:RotInvSec}
The C-cost $C(A)$ in Eq.~\eqref{eq:ecost} is invariant to rotation of all the parameter configurations. 
In other words, a parameter configuration $({\bm q}_1,..., {\bm q}_K)$ and its rotated configuration $(R{\bm q}_1,..., R{\bm q}_K)$ have the same value of the C-cost, where $R$ is a rotation matrix $ \in {\mathbb R}^{d\times d} (R^TR=I)$. 
See Appendix~\ref{apdx:rotation_invariance} for the proof of rotation invariance.
This implies that, for any parameter $\bm{q}$ of a single-qubit gate of interest, there exists the optimal parameter configuration $\{\bm{q}_i\}$ such that $\bm{q}\in \{\bm{q}_i\}$.
This property allows for one reduction of the total number of measurements required in the matrix construction, i.e. reduced to two for Rotosolve, five for Fraxis, and nine for FQS by diverting the previous results to the subsequent gate update. The reduction for Rotosolve has been known before~\cite{ostaszewski2021} but not for Fraxis and FQS. In each step of the sequential optimizations, the resulting cost value after the parameter update can be estimated without additional measurement.
Since all parameters are fixed except for that of the target gate, this estimated cost can be regarded as one of the observable expectation value $b_1$ in the subsequent application, where the parameter $\bm{q}_1$ of the next gate of interest is diverted from the previous application.

\begin{figure}[t]
\begin{algorithm}[H]
\caption{Algorithm to reuse optimization results of the previous gate}\label{algo:resusing} 
\begin{algorithmic}[1]
\Require The parameter $\bm{q}^\mathrm{pre}$ of the target gate,
The estimated minimum eigenvalue $\lambda^\mathrm{pre}$ in the previous FQS, and the optimal parameter configurations $\{\bm{q}^*_1, \cdots, \bm{q}^*_N\}$.
\Ensure the optimized parameter of the target gate $\bm{q}^\mathrm{opt}$ and the updated cost $\lambda$. 
\State Find a rotation matrix $R$ such that $\bm{q}^\mathrm{pre}=R \bm{q}_1^*$
\State Set $b_1 = \lambda^\mathrm{pre}$ (instead of measuring $b_1 = \braket{H}(\bm{q}^*_1$))
\For {$i = 2$ to $N$}
        \State Measure $b_i = \braket{H}(R\bm{q}^*_i)$
\EndFor
\State Set $G = \mathrm{vech}^{-1}(DA^+\bm{b})$.
\State Diagonalize $G$ and obtain the minimum eigenvalue $\lambda$ and the corresponding eigenvector $\bm{q}^\mathrm{opt}$
\State Return $\bm{q}^\mathrm{opt}$ and $\lambda$  
\end{algorithmic}
\end{algorithm}
\end{figure}

The detailed procedure is as follows; (1) Prepare an optimal parameter configuration $\{\bm{q}^*\}$, the gate parameter set $\{\bm{q}^{(m)}\}$ for $m = 1,\cdots,M $, and the temporal cost value $\braket{H}(\{\bm{q}^{(m)}\})$ where $m$ and $M$ denote the gate index and the total number of parametrized gates, respectively.
(2) Finds a rotation matrix $R$ such that $\bm{q}_1^*=R^T\bm{q}^{(m)}$ where the $m$th gate is of interest and sets $b_1=\braket{H}$.
(3) Measure the cost values with the parameter  $\{R \bm{q}_i^*\}$ for $i=2,\cdots N_\mathrm{min}$ setting $b_i=\braket{H}(R\bm{q}_i^*)$.
(4) Construct the matrix $G$ from $\bm{b}$ and $\{R \bm{q}_i^*\}$ 
(5) Diagonalise the matrix to estimate the new parameter $\bm{q}^{(m)}$ and the new cost $\braket{H}$, which can be reused in the next iteration and go back to (2) until convergence.
The pseudo-code of this procedure is given in Algorithm~\ref{algo:resusing}.

\subsection{Optimal configurations} \label{subsec:optMCs}
The minimum size of parameter configuration ($N_\mathrm{min}$) for Rx, Fraxis, and FQS are 3, 6, and 10, respectively. 
According to Corollary~\ref{thm:equiangular} 
in the case of the Rx gate, 
the three equiangular vectors on a unit circle are trivially represented by $\bm{q} = [\cos{\frac{2}{3}\pi n\theta }, \sin{\frac{2}{3}\pi n\theta}]^T$ for $n=0,\pm 1$, that is, the vector angle $\Delta\theta=2\pi/3$ (equivalently $\pi/3$) 
as shown in Figure~\ref{fig:optimalMCs}A.
In contrast, the original parameter configuration proposed in Rotosolve~\cite{ostaszewski2021} was $\Delta\theta=\pi/2$, which resulted in $C(A)=3/2$.
(It is worth noting that in~\cite{nakanishi2020} it is argued that arbitrary parameter configurations can be used due to the sine property of the expectation value but did not discuss the estimation accuracy dependent on the parameter configurations under the finite measurements.)
To achieve the same estimation accuracy, our optimal parameter configuration ($\Delta\theta= 2\pi/3$) requires two-thirds as many shots as the original parameter configuration ($\Delta\theta=\pi/2$). 

Corollary~\ref{thm:equiangular} is also instrumental for Fraxis with $d=3$. It is also possible to find the equiangular formation of six unit vectors in 3D space.
Figure~\ref{fig:optimalMCs}B shows the unique optimal parameter configuration except for the rotational degrees of freedom, where they form a regular icosahedron.
The original parameter configuration of Fraxis has $C(A) = 1.8$ \cite{watanabe2021} (See Appendix \ref{apdx:used_parameter}). 
Thus, the optimal configuration improves the estimation accuracy $1.8$ times with the consistent number of shots.

In contrast, it was proved that $N_\mathrm{min}$ ($=10$) equiangular unit vectors cannot be placed in $d$ ($=4$) dimensional Euclidean space.
Namely, Corollary~\ref{thm:equiangular} tells that there is no parameter configuration that satisfies $C(A)=1$ for $N=10$.
In addition, Corollary~\ref{thm:equiangular} also implies that the minimum size of the parameter configuration ($N=10$) may not be the most efficient if the total number of shots are limited for a single FQS execution, although it is not straightforward to know the analytical minimum value and the corresponding parameter configurations.
Instead, we searched the numerical solution by classical optimization, where $C(A)$ is minimized based on the gradient descent method. 
Since the algorithm may lead to a local minimum solution, we repeated the algorithm independently $10^5$ times starting from random initial configurations. 

For $N=10$, we have obtained the same optimized C-cost value ($C(A)\approx 1.033172$) from all the initial configurations as far as our experimental trials, which implies that all simulations presumably reached to the global minimum.
Although the obtained configurations were not numerically identical,
we found that they were attributed to a unique configuration just by reversal and rotational operations.
Since the reversal of each parameter does not affect the expectation value (i.e., $h(\bm{q})=h(-\bm{q})$) and the uniform rotation of the parameter configuration gives the indentical value of the C-cost (See Sec.~\ref{subsec:RotInvSec}), 
all the configurations were equivalent, which seem to be optimal.

Figure~\ref{fig:optimalMCs}C shows the unique optimal parameter configurations for the FQS case.
In this figure, the parameter configurations are projected into 3-dimensional space by a stereographic projection. It means that extra 1D components that cannot be displayed are projected in the radial direction. 
See Appendix~\ref{apdx:used_parameter} for the parameter values of the optimal and other parameter configurations. 
From the parameter values of the (numerically obtained) optimal parameter configurations (Eq.~\eqref{eq:opt_MC_FQS}), we can see the optimal parameter configurations has highly symmetrical structure; the first four parameters $\{\bm{q}_1, ..., \bm{q}_4\}$ and its opposite $\{-\bm{q}_1, ..., -\bm{q}_4\}$ constitute a regular cube in a hyperplane and the last six parameters $\{\bm{q}_5, ..., \bm{q}_{10}\}$ constitute a regular octahedron in a hyperplane (its opposite also constitute another regular octahedron) as shown in Fig.~\ref{fig:optimalMCs}.

For FQS, the original parameter configuration  has $C(A) = 3.0$ and the optimal parameter configurations estimated with numerical experiments is approximately $C(A)\approx 1.033172$.
And thus, to achieve a certain accuracy, the optimal parameter configuration reduces the number of required shots 3 times than that of the original.

\begin{table}[t]
\caption{\textbf{C-cost values for the different sizes of parameter configurations of FQS.} (A) Comparison of the C-cost $C(A)$ with a constant number of shots for evaluating an expectation value. (B) Comparison of scaled C-cost $(N-1)C(A)/N$ with constant number of shots per single-gate optimization.}
\centering
\label{tbl:Redundant}
\begin{tabular}{c|ccc}
N & 10 & 11 & 12 \\ \hline
(A) & 1.03317 & 1.00539 & \textbf{1.00000}  \\
(B) & 0.92985 & \textbf{0.91399} & 0.91667 \\
\hline
\end{tabular}
\end{table}

Likewise, we also conducted the numerical optimization to find the optimal parameter configuration for redundant measurements with $N=11, 12$.
As a result, all the optimizations converged to a consistent value of $C(A)$ within computational precision, which is consistent with the case of $N=N_\mathrm{min}$. 
However, the optimal configurations are not necessarily unique, which is in contrast to $N=N_\mathrm{min}$.
While the obtained $C(A)$ was $\approx$ 1.005390 for $N=11$,  $C(A)$ was exactly converged to unity for $N=12$.
It is also notable that the optimal configurations of $C(A)=1$ for $N=12$ include the regular 24-cell polytope in 4-dimensional space as shown in Fig.~\ref{fig:optimalMCs}D.

Therefore, If the total number of shots for $A$ matrix construction is constant, the optimal sizes of $N$ are three for Rotosolve, six for Fraxis, and twelve for FQS.

Next, we focus on the optimal $N$ allowing the reduction of measurements exploiting the rotation invariance as mentioned in Sec.~\ref{subsec:RotInvSec}.
Assuming a constant number of shots per gate, the measurement reduction modifies the relation between $C(A)$ and $s_\mathrm{min}$ as 

\begin{equation}\label{eq:error_reused}
    \mathrm{Var}[\lambda_1^*] = \frac{\sigma^2}{s_\mathrm{min}}\frac{N-1}{N}C(A). 
\end{equation}
where the C-cost is apparently scaled by $(N-1)/N$.
Note that this factor does not change the optimal $N$ for Rotosolve and Fraxis.
Thus, 
it is most efficient to revert the estimated value in previous optimization to construct $A$ and additionally execute two and five measurements for Rotosolve and Fraxis, respectively. 
It is worth noting that Table.~\ref{tbl:Redundant} shows that
the optimal $N$ for FQS is shifted from twelve to eleven by measurement reduction, although the difference is smaller than 1 \%.  
Altogether, under limitation of the total number of shots, it is most efficient to construct the matrix $A$ by three-, six-, and twelve-type measurements for the expectation values
in the beginning of Rotosolve, Fraxis, and FQS optimizations, respectively.
In contrast, during the sequential optimization, matrix $A$ should be made by one estimation value from the previous step and two, five, and ten values from subsequent measurements of Rotosolve, Fraxis, and FQS, respectively.  

It should be also noted that this optimal condition may differ depending on the supposed condition of real devices.
For instance, if parallel computation is allowed, where a constant number of shots are available for evaluating an expectation value even though when $N$ varies, $C(A)$ would not be an appropriate metric because the assumption about the number of shot is not valid, and thus one should trivially employ as large $N$ as as possible.

\begin{figure*}[t]
    \includegraphics[width=0.8\linewidth]{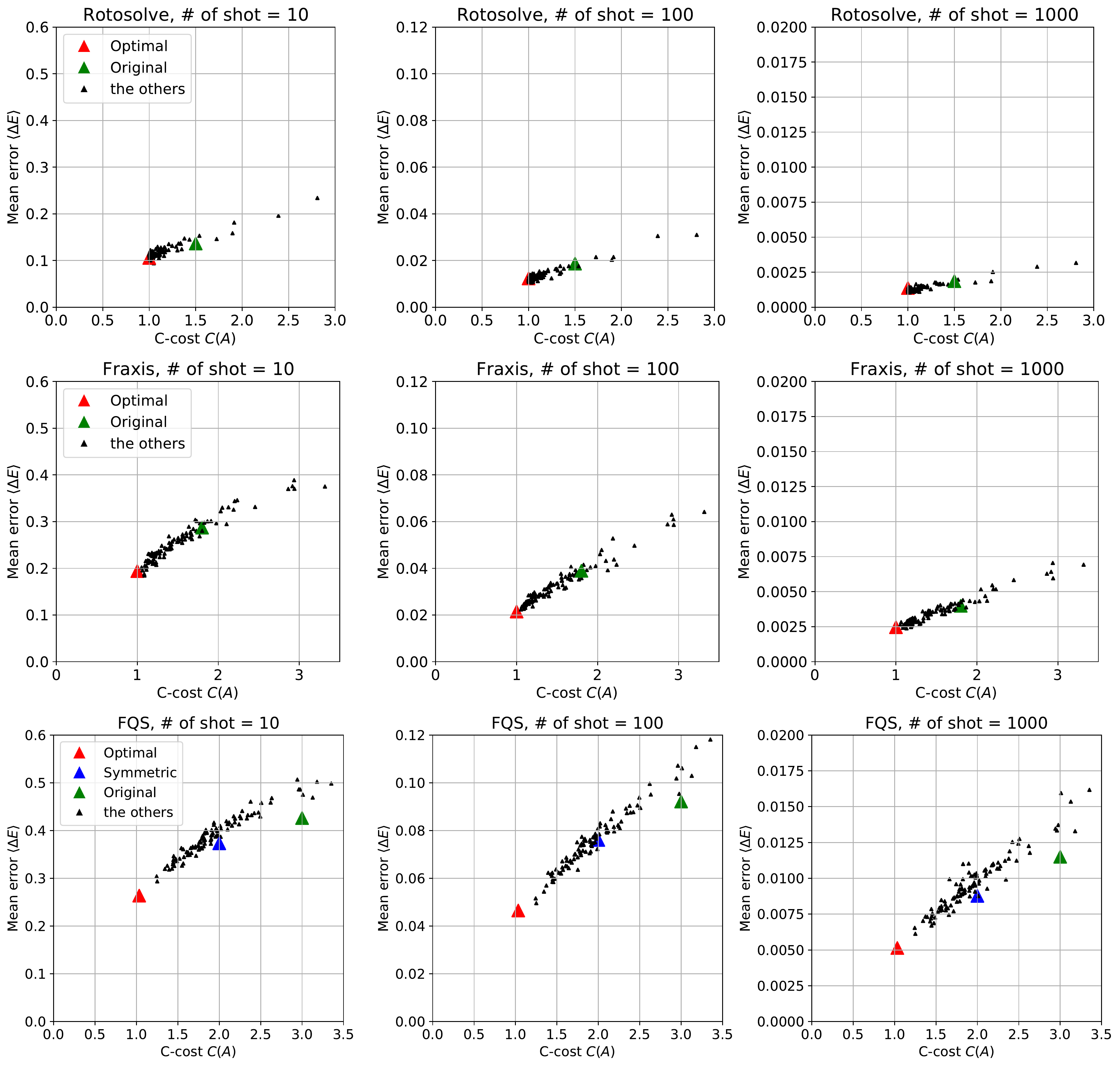}
    \caption{ \textbf{The average energy error for one-time optimization with different parameter configurations.
    }
Each subplot shows the averaged deviations of the estimated minimum from the true minimum energy (vertical axis), where the former energy was evaluated from $G$ made with randomly generated parameter configurations under the limited total number of shots, while the latter energy were obtained {\it statevector simulator}.
The left, center, and right columns show the results with 10, 100, and 1000 shots per circuit, respectively. 
The top, center, and bottom rows show the results for Rotosolve, Fraxis, and FQS, respectively. 
The results of original and optimal parameter configurations are highlighted in the figure.
The description about the number of shots above each subplot represents the number of shots used for a single mean value of the Hamiltonian based on a parameter configuration.
}
    \label{fig:ecost_of_difcof}
\end{figure*}

\section{Experiments}\label{sec:experiments}
In the following, we provide several experiments to numerically verify our proposed method on the condition of $N=N_\mathrm{min}$.

\subsection{Estimation Accuracy of One-time Optimization with Different Parameter Configurations}\label{subsec:expB}

We focused on the one-time optimization rather than an entire VQE processes.
To this end, we examined the averaged error of FQS between the exact minimum and the estimated minimum energies with limited number of shots for several parameter configurations.
We used the 2-qubit Hydrogen molecule-like Hamiltonian \cite{bravyi2017tapering} defined as
\begin{equation}
    \label{eq:ham_exp1}
    H = I\otimes Z + Z\otimes I + X\otimes X
\end{equation}
in this experiment. 
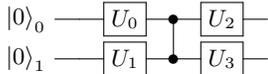
\begin{figure}
\[
\Qcircuit @C=1.0em @R=0.25em {
  \lstickx{\ket{0}_0} & \qw & \gate{U_{0}} & \ctrl{1} & \gate{U_{2}} & \qw \\
  \lstickx{\ket{0}_1} & \qw & \gate{U_{1}} & \ctrl{0} & \gate{U_{3}} & \qw &
}
\]
\caption{\textbf{2-qubit ansatz}}
\label{fig:2q_ansatz}
\end{figure}
We use the 2-qubit ansatz in Fig.~\ref{fig:2q_ansatz}, where we applied the corresponding single-qubit gate representation of Rotosolve (=RzRy), Fraxis, and FQS methods to $U_i$.
Here, the target gate to be optimized is $U_2$ for the FQS and Fraxis and the Ry gate in $U_2$ for the Rotosolve case. 
The experiments were performed as the following procedure. 
(1) prepare 100 independent parameter configurations, where the parameters of all the gates were randomly initialized with uniform probability distribution, which was followed by 50 iterations of the steepest decent optimization using $C(A)$ as a cost function.
(2) evaluate $A^+$ and $C(A)$ of the 100 parameter configurations.
(3) randomly initialize the PQC in the state-random manner for the respective single-qubit gates~\cite{watanabe2021}.
(4) obtain $\bm{b}$ (and $\bm{b}^*$) by the observable measurements based on the 100 parameter configurations,
and evaluate FQS matrices $G$ (and $G^*$) using the respective sets of $A^+$, $\bm{b}$ (and $\bm{b}^*$).
(5) execute FQS (Fraxis/Rotosolve) for $G$ (and $G^*$) to obtain $\bm{p}$ (and $\bm{p}^*$).
We repeat the procedure (3)--(5) $10^4$ times and evaluate the averaged error $\langle \Delta E \rangle$ in Eq.~\eqref{eq:EnergyDeviation} for each parameter configuration.
Note that we optimized the parameter configuration in process (1) above because the raw values of $C(A)$ distributed beyond $10^4$ otherwise.
In Fig.~\ref{fig:ecost_of_difcof}, we plotted 100 independent parameter configurations in $C(A)$ vs. $\braket{\Delta E}$ graph.
By definition, $C(A)$ and $\braket{\Delta E}$  are metrics to qualify the estimated energy and the estimated parameter, respectively.  
Although both the metrics are linked through the following equation, 
\begin{align}
\frac{N_\mathrm{min} d}{N}C(A) + \frac{sd(d-1) }{k\sigma^2} \mathbb{E}[\Delta E]  = \mathrm{Tr}[(A^TA)^{-1}],
\end{align}
the concrete behaviors are not necessarily trivial because of dependency on $A$ and the observable.
Here, we confirmed that the energy errors $\braket{\Delta E}$ are roughly proportional to $C(A)$ for all the cases, and $\braket{\Delta E}$ is inversely proportional to the number of shots approximately.
We also found that the optimal parameter configuration (red) achieves the lowest error$\braket{\Delta E}$, indicating that the optimal parameter configurations are actually effective to minimize the estimation error.
Although the magnitude of $\braket{\Delta E}$ in FQS is seemingly larger than that of Rotosolve, we note that it does not necessarily indicate the advantage of Rotosolve with respect to error suppression because the single gate expressibility is not comparable among the respective methods.
For instance, sequential Rotosolve applications of a series of three single-qubit gates are comparable to one-time FQS application.
In this case, however, it is not straightforward to compare them because of error propagation, which is beyond the present framework.
In the next section, instead, we examine the effect of the parameter configuration on the entire performance in comparison with the optimization methods.

\begin{figure}[t]
\leavevmode 
\Qcircuit @C=.2em @R=0.25em {
  \lstickx{\ket{0}_0} & \qw & \gate{U_{0}} & \ctrl{1}     & \qw  & \qw  & \qw & \qw  & \qw &\qw
      & \qw &\control \qw  
   &  \gate{U_{5l+4}} &\qw & \qw \\
  \lstickx{\ket{0}_1} & \qw & \gate{U_{1}} &  
      \control \qw & \gate{U_{5l}} &\ctrl{1}& \qw 
       & \qw& \qw&  \qw&  \qw & \qw &\qw &\gate{U_{5L+5}}  &\qw  \\
  \lstickx{\ket{0}_2} & \qw & \gate{U_{2}} & \qw   
      & \qw & \control \qw & \gate{U_{5l+1}}
      & \ctrl{1}&\qw& \qw&  \qw & \qw&\qw&\gate{U_{5L+6}}  
       & \qw  \\
  \lstickx{\ket{0}_3} & \qw & \gate{U_{3}} & \qw 
      & \qw  & \qw \qw & \qw & \control \qw 
      & \gate{U_{5l+2}} & \ctrl{1} \qw &\qw & \qw&\qw &\gate{U_{5L+7}} 
       & \qw   \\
  \lstickx{\ket{0}_4} & \qw & \gate{U_{4}} & \qw  
      & \qw & \qw & \qw   & \qw & \qw& \control \qw & \gate{U_{5l+3}} & \ctrl{-4}\qw&\qw& \gate{U_{5L+8}}  & \qw  \\
  &&&&&&&&&&&& \arrep{lllllllll}
  \gategroup{1}{4}{6}{13}{.25em}{--}
}
\caption{ \textbf{Cascading-block ansatz} }
\label{fig:cbansatz}
\end{figure}
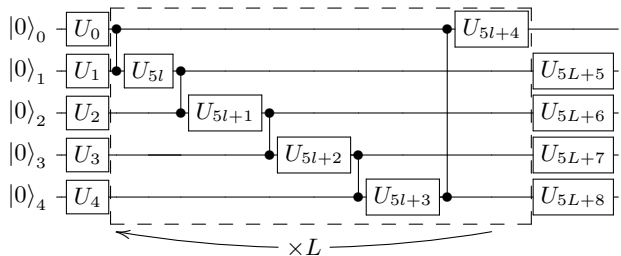

\begin{figure*}[t]
    \includegraphics[width=1.0\linewidth]{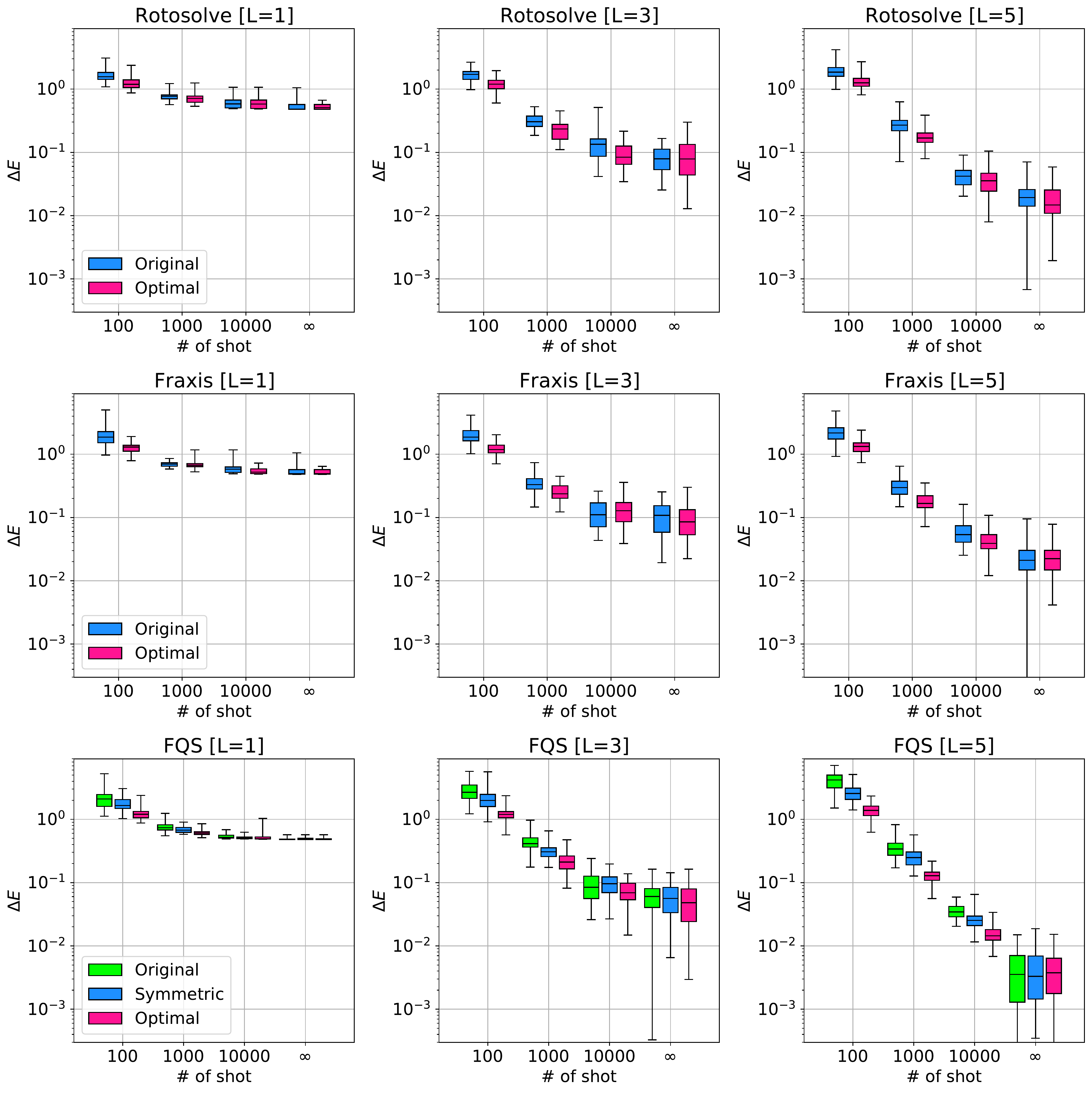}
    \caption{ \textbf{
    Comparison of the VQE performance with different parameter configurations.} 
    The vertical axis represents the deviation of the resulting VQE energy from the exact ground state energy. The respective energies were evaluated after 100 sweeps, where one sweep stands for sequential updates of all the single-qubit once. 
    The box-and-whisker plots show the statistics of the energy deviations ($\Delta E$) obtained by independent 100 VQE runs. 
    We carried out the VQEs with the circuit layers ($L$) from 1, 3, 5 and showed the results in respective subplots.
       }
    \label{fig:RC1}
\end{figure*}

\subsection{The Influence of the C-cost on VQE Performances}
\label{subsec:expC}

We investigate the effect of different parameter configurations on the results of VQE when we sequentially optimize single-qubit gates in quantum circuits by the framework of FQS~\cite{wada2022full}.
We employed the 5-qubit quantum Heisenberg model~\cite{bonechi1992heisenberg} defined as
\begin{equation}
    H = J\sum_{i=1}^5{ \sum_{\sigma=X,Y,Z}{ \sigma_i \sigma_{i+1} } }
    + h\sum_{i=1}^5{ Z_i }
\end{equation}
where $\sigma_i=I^{\otimes i-1}\otimes \sigma \otimes I^{\otimes 5-i}$($1\leq i\leq 5$), $\sigma_6=\sigma_1$.
We herein set $J=h=1$.
We used a Cascading-block ansatz shown in Fig.~\ref{fig:cbansatz}, where the gates within the dashed lines are repeated $L$ times.
We set $L=1, 3, 5$ in this experiment.
According to the optimization method, we applied the respective single-qubit representations to $U_i $ in the PQC.
We begin VQE with randomly initialize PQC in the state-random manner for respective single-qubit gates in the PQC.
In VQE, we sequentially applied Rotosolve/Fraxis/FQS to $U_i$ in the order of subscripts in Fig.~\ref{fig:cbansatz}, i.e., from the top-left to the right bottom. 
We term this procedure to update all gates in the PQC once as \textit{sweep}. 
In a single VQE run, we carried out 100 sweeps to obtained the estimated minimum eigenvalue $E$ of the Hamiltonian.
We performed independent 100 VQE runs and plotted the error distribution $\Delta E := E - E^*$ for respective 100 trials in Fig.~\ref{fig:RC1}, where $E^*$ is the exact minimum eigenvalue of the Hamiltonian.
We evaluated the resulting distributions using the number of shots to 100, 1000, 10000, and $\infty$ for two or three different parameter configurations (See Appendix~\ref{apdx:used_parameter} for the specific parameter values). 
Note that we used a statevector for VQE with an infinite number of shots.
Figure~\ref{fig:RC1} suggests that parameter configurations strongly affect the entire VQE performance and shows that the optimized parameter configuration ($C(A)\simeq1$) achieves the smallest errors on all the conditions with the finite numbers of shots. 
The optimal parameter configuration works more effectively 
as the number of shots is smaller, which is in line with the analysis of the one-time application to a single-qubit gate in Fig.~\ref{fig:ecost_of_difcof}. 
In addition, the impact of the parameter configuration on the VQE performance is not visible on shallow circuit and more distinct as the number of the layer increases.
In general, more expressive ansatz can potentially approximate the state of interest with higher precision.
Correspondingly, one has to increase the number of shots, because for accuracy $\epsilon$, the number of required shots scales in $\mathcal{O}(1/\epsilon^2)$. 
Otherwise, the enhanced expressibility by the circuit extension may not be highlighted.
Since the gain of $C(A)$ is equivalent to the increase of measurements, the optimal parameter configuration will be more beneficial as desired accuracy in VQE becomes higher.
In fact, FQS is superior to Rotosolve and Fraxis and the statevector simulation implies that FQS with ansatz of $L=5$ can potentially achieve the accuracy $\epsilon < 10^{-2}$.
However, it is less likely to reach this energy level with the 10000 shots which is a practical standard for the present quantum devices, i.e. IBM-Q device.
There, the parameter optimization assists VQE lowering the reachable energy level distinctively, although it is not the case for Rotosolve and Fraxis because the number of shots available are sufficient relative to their expressibility.

\section{Conclusions}
In this work, we showed that the parameter configuration affects the performance of analytical optimization of a single-qubit gate. 
This estimation error was quantified by the C-cost $C(A)$, the variance of the estimated value of the cost function.
We theoretically proved that the lower bound of $C(A)$ is unity.
We also showed that when the size of the parameter configuration is minimal, the C-cost becomes unity if and only if the parameter configuration satisfies the equiangular condition.
Exploiting this property, we found the optimal parameter configuration for Rotosolve and Fraxis.
Although we revealed no parameter configuration of minimum size for FQS achieves $C(A)=1$, it turned out the parameter configuration of $N=12$ corresponding to the regular 24-cell polytope in the 4-dimensional space satisfies $C(A)=1$.
In addition, we also demonstrated how to reduce the number of measurements for matrix construction by making use of the rotation invariance of $C(A)$. 
Then, the optimal parameter configurations exhibited the best results improving efficiency 1.5 times for Rotosolve, 1.8 times for Fraxis, and 3.0 times for FQS, when compared to the original parameters.
Additional numerical experiments showed that the parameter configuration affects the performance of not only the one-time optimization but also the entire VQE. 
We also found that the parameter configuration is more instrumental to elicit the VQE performance as the ansatz becomes more expressive.

\section*{Acknowledgements}
R.R. would like to thank Prof. David Avis of Kyoto University for the discussion on equiangular lines. 
H.C.W. was supported by JSPS Grant Numbers 20K03885. 
N.Y. and H.C.W were supported by the MEXT Quantum Leap Flagship Program Grant Number JPMXS0118067285 and JPMXS0120319794.

\appendix

\setcounter{theomain}{0}
\setcounter{corollary}{0}

\section*{Appendix}
\subsection{Free Quaternion Selection}
\label{apdx:FQS}

We show the minimum value of Eq.~\eqref{eq:hmat} is the minimum eigenvalue  $\lambda_1$ of $G$ achieved when $\bm{q} = \bm{p}_1$ for the corresponding eigenvector $\bm{p}_1$ of $G$. 
%

For the Lagrange multiplier method, we first define a function, $l(\bm{q},\lambda)$, corresponding the above optimization problem as
\begin{equation}
    l(\bm{q},\lambda) = \bm{q}^T G \bm{q} - \lambda ( \|\bm{q}\|^2 -1 ),
\end{equation}
where $\lambda$ is a Lagrange multiplier.
Taking the partial derivatives for $l(\bm{q},\lambda)$ and setting them to zero, we can obtain
\begin{eqnarray}
    G\bm{q} = \lambda \bm{q}.
\end{eqnarray}
Thus, the candidates for the local minimum/maximum value of $l(\bm{q},\lambda)$ and the solutions are the eigenvalues $\lambda_i$ and its normalized eigenvectors $\bm{p}_i$, respectively.

Substituting the normalized eigenvectors $\bm{p}_i$ into Eq.~\eqref{eq:hmat}, we get
\begin{align}
    \langle H \rangle 
    =\bm{p}^T_i G \bm{p}_i
    =\bm{p}^T_i (\lambda_i \bm{p}_i)
    =\lambda_i,
\end{align}
this means the global minimum value of Eq.~\eqref{eq:hmat} and its solution are given by the minimum eigenvalue $\lambda_1$ and the corresponding normalized eigenvector $\bm{p}_1$.

\subsection {Derivation of Analytical form of the Measures}
\subsubsection{Expectation value over an orthogonal basis}
\label{apdx:OrthogoIntegral}

We show several equations that are useful for derivation of analytical form of the measures.

Let $Z \in \mathbb R^{d\times d}$ be a random symmetric matrix which 
satisfies $\mathbb{E}[Z_{ij}] = 0$ for all $i,j$. 
Independently, let $P=(\bm{p}_1,...,\bm{p}_d)^T \in \mathbb{R}^{d\times d}$ be a random orthogonal matrix (i.e., the matrix is uniformly sampled from the orthogonal group $O(4)$).
Then, the following equations holds:
\begin{align}
\mathbb{E}[\bm{p}_i^{*T} Z \bm{p}_{j}^*] 
    =&\sum_{k,l} \mathbb{E}[ (\bm{p}_i^*)_k (\bm{p}^*_{j})_l Z_{kl} ]\notag \\
    =&\sum_{k,l} \mathbb{E}[ (\bm{p}_i^*)_k (\bm{p}^*_{j})_l ] \mathbb{E}[ Z_{kl} ] \notag\\
    =&~0,
\end{align}
and so,
\begin{align}\label{apdx:pzp_variance}
    \mathrm{Var}[\bm{p}_i^{*T} Z \bm{p}_{j}^*]
    =& \mathbb{E}[(\bm{p}_i^{*T} Z \bm{p}^*_{j})^2]
    - \mathbb{E}[\bm{p}_i^{*T} Z \bm{p}^*_{j}]^2 \notag \\
    =& \mathbb{E}[(\bm{p}_i^{*T} Z \bm{p}^*_{j})^2].
\end{align}

For $i=j$,

\begin{align}\label{apdx:pZp_variance2}
    \mathrm{Var}[\bm{p}_i^{*T} Z \bm{p}_{i}^*]
    =& \mathbb{E}[(\sum_{k,l} (\bm{p}_i)_k Z_{kl} (\bm{p}_j)_l)^2] \notag\\
    =& \sum_{k,l,m,n} \mathbb{E}[ (\bm{p}_i)_k (\bm{p}_i)_l (\bm{p}_i)_m (\bm{p}_i)_n ]\mathbb{E}[ Z_{kl} Z_{mn} ]\notag \\
    =& \sum_{k(=l)} \sum_{m(=n\neq k)} \mathbb{E}[ (\bm{p}_i)_k^2 (\bm{p}_i)_m^2] 
    \mathbb{E}[ Z_{kk} Z_{mm} ] \notag\\
    &+ \sum_{k(=m)} \sum_{l(=n\neq k)} \mathbb{E}[ (\bm{p}_i)_k^2 (\bm{p}_i)_l^2] 
    \mathbb{E}[ Z_{kl}^2 ] \notag \\
    &+ \sum_{k(=n)} \sum_{l(=m\neq k)} \mathbb{E}[ (\bm{p}_i)_k^2 (\bm{p}_i)_l^2] 
    \mathbb{E}[ Z_{kl} Z_{lk} ] \notag \\
    &+ \sum_{k}  \mathbb{E}[ (\bm{p}_i)_k^4] 
    \mathbb{E}[ Z_{kk}^2] \notag\\
    =&\sum_{k,l}\frac{\mathbb{E}[ Z_{kk} Z_{ll}]+ \mathbb{E}[ Z_{kl}^2] + \mathbb{E}[ Z_{kl} Z_{lk} ]}{d(d+2)}\notag\\
    =&\sum_{k,l} \frac{ \mathbb{E}[ Z_{kk} Z_{ll} ]+ 2\mathbb{E}[ Z_{kl}^2]}{d(d+2)}.
\end{align}
For the fourth equality, we employed the following relation. 
\begin{align}
    \mathbb{E}[(\bm{x}^T Z \bm{x})^2]
    =& d(d+2) \mathbb{E} [ ( \frac{\bm{x}^T}{\|\bm{x}\|} Z \frac{\bm{x}}{\|\bm{x}\|} )^2 ],
\end{align}
where $\bm{x}$ is a random vector in $\mathbb R^{d}$, which follows the $d$-dimensional multivariate standard normal distributions $\mathcal{N}(\bm{0},I)$, and $\bm{x}, Z$ are independent each other.

To evaluate Eq.~\eqref{apdx:pzp_variance} for $i\ne j$, we suppose another random vector $\bm{y}$ as $\bm{x}$, but independent of $\bm{x}$ and $Z$.
\begin{align} \label{eq:xzy2_mean}
    \mathbb{E}[\left(\bm{x}^T Z \bm{y}\right)^2]
    =& \mathbb{E}[(\sum_{i,j}(\bm{x})_i Z_{ij} (\bm{y})_j)^2] \notag\\
    =& \mathbb{E}[\sum_{i,j,s,t} (\bm{x})_i (\bm{x})_s Z_{ij} Z_{st} (\bm{y})_j (\bm{y})_t]\notag \\   
    =& \sum_{i,j,s,t} \mathbb{E}[ (\bm{x})_i (\bm{x})_s ]\mathbb{E}[ (\bm{y})_j (\bm{y})_t ]\mathbb{E}[ Z_{ij} Z_{st} ] \notag\\
    =& \sum_{i,j,s,t}\delta_{j,s} \delta_{i,t} \mathbb{E}[ Z_{ij} Z_{st} ]\notag\\    
    =& \sum_{i,j} \mathbb{E}[ Z_{ij}^2 ].
\end{align}
Here, we introduce two vectors as
\begin{equation}
    \bm{y}_{\parallel} = \frac{ (\bm{y}\cdot \bm{x}) }{ \|\bm{x}\|^2 }\bm{x}, ~~\bm{y}_{\perp} = \bm{y} - \bm{y}_{\parallel}.
\end{equation}
Using these vectors, we obtained the following relation,
\begin{align}\label{apdx:xZy_mean_v2}
    \mathbb{E}[(\bm{x}^T Z \bm{y})^2] 
    =&\mathbb{E}[(\bm{x}^T Z (\bm{y}_{\parallel}+\bm{y}_{\perp}))^2]\notag\\
    =&\mathbb{E}[(\bm{x}^T Z \bm{y}_{\parallel})^2] +
    \mathbb{E}[(\bm{x}^T Z \bm{y}_{\perp})^2]\notag\\
    &+
    2\mathbb{E}[(\bm{x}^T Z \bm{y}_{\parallel})(\bm{x}^T Z \bm{y}_{\perp})]\notag \\
    =&\mathbb{E}[(\bm{x}^T Z \bm{y}_{\parallel})^2] +
    \mathbb{E}[(\bm{x}^T Z \bm{y}_{\perp})^2].
\end{align}
For the third equality, we use the probability distribution $f$ satisfies
$f(\bm{x},\bm{y}_{\parallel},\bm{y}_{\perp})=f(\bm{x},\bm{y}_{\parallel},-\bm{y}_{\perp})$,
and thus
\begin{align}
    \mathbb{E}[(\bm{x}^T Z \bm{y}_{\parallel})(\bm{x}^T Z \bm{y}_{\perp})] =
    & \mathbb{E}[(\bm{x}^T Z \bm{y}_{\parallel})(\bm{x}^T Z (-\bm{y}_{\perp}))],
\end{align}
equivalently,
\begin{align}
     \mathbb{E}[(\bm{x}^T Z \bm{y}_{\parallel})(\bm{x}^T Z \bm{y}_{\perp})] 
    =& 0.
\end{align}
On the other hand, 
\begin{align} \label{apdx:pZpNot2_mean}
\mathbb{E}[(\bm{p}^{*T}_i Z \bm{p}^*_{j(\neq i)})^2]
    =&\mathbb{E}\left[\left(
    \frac{\bm{x}^T}{\|\bm{x}\|} Z \frac{\bm{y}_{\perp}}{\|\bm{y}_{\perp}\|}
    \right)^2\right] \notag \\
   =&\frac{1}{\mathbb{E}[\|\bm{x}\|^2]
    \mathbb{E}[\|\bm{y}_{\perp}\|^2]}
    \mathbb{E}[(\bm{x}^T Z \bm{y}_{\perp})^2]\notag \\
   =&\frac{1}{d(d-1)}
   \mathbb{E}[(\bm{x}^T Z \bm{y}_{\perp})^2]
\end{align}
where we suppose that the probability distribution $f$ satisfies $f(\bm{x}/\|\bm{x}\|,\bm{y}_{\perp}/\|\bm{y}_{\perp}\| )=f(\bm{p}_i,\bm{p}_{j(\neq i)}), \textit{a.e.}$, $\mathbb{E}[\|\bm{x}\|^2]=d$ and $\mathbb{E}[\|\bm{y}_{\perp}\|^2]=d-1$.

In addition the first term in Eq.~\eqref{apdx:xZy_mean_v2} 
\begin{align}\label{apdx:xZypar2_mean}
    \mathbb{E}[(\bm{x}^T Z \bm{y}_{\parallel})^2] 
    =&\mathbb{E}\left[\|\bm{x}\|^2 \|\bm{y}_{\parallel}\|^2\left(
    \frac{\bm{x}^T}{\|\bm{x}\|} Z \frac{\bm{y}_{\parallel}}{\|\bm{y}_{\parallel}\|}
    \right)^2\right] \notag \\    =&\mathbb{E}[\|\bm{x}\|^2]\mathbb{E}[\|\bm{y}_{\parallel}\|^2] 
    \mathbb{E}\left[\left(
    \frac{\bm{x}^T}{\|\bm{x}\|} Z \frac{\bm{y}_{\parallel}}{\|\bm{y}_{\parallel}\|}
    \right)^2\right] \notag \\
    =& d~\mathbb{E}[(\bm{p}^{*T}_i Z \bm{p}^*_{i})^2 ] \notag \\
    =&\sum_{k,l} \frac{ \mathbb{E}[ Z_{kk} Z_{ll} ]+ 2\mathbb{E}[ Z_{kl}^2]}{d+2}.
\end{align}
where the second equality arises from the independence of the random variables, and the third equality is based on 
$\bm{x}/\|\bm{x}\|=\bm{y}_{\parallel}/\|\bm{y}_{\parallel}\|$, $f(\bm{x}/\|\bm{x}\|) 
= f(\bm{p}_i), \textit{a.e.}$, $\mathbb{E}[\|\bm{x}\|^2]=d$ and $\mathbb{E}[\|\bm{y}_{\parallel}\|^2]=1$.

From Eq.~\eqref{apdx:pZp_variance2}\eqref{apdx:xZy_mean_v2}\eqref{apdx:pZpNot2_mean}\eqref{apdx:xZypar2_mean}, we finally obtain 
\begin{align}\label{apdx:pZpNot_variance}
    \mathrm{Var}[\bm{p}_i^{*T} Z \bm{p}^*_{j(\neq i)}]
    &=\mathrm{E}[(\bm{p}_i^{*T} Z \bm{p}^*_{j(\neq i)})^2]\notag\\
    &=\sum_{k,l} \frac{d \mathbb{E}[Z_{kl}^2]-\mathbb{E}[Z_{kk}Z_{ll}]}{d(d-1)(d+2)}. 
\end{align}

\subsubsection{Derivation of analytical form of \texorpdfstring{$\mathrm{Var}[\lambda_1(\bm{\epsilon})]$}{Var[λ(ε)]}}
\label{apdx:analytical_form}
 Using the noise model $\langle \epsilon_i \rangle = \bm{0}, \langle \epsilon_i \epsilon_j \rangle = \sigma^2 \delta_{i,j}/s$ and the uniformly distributed model of the first eigenvector $\bm{p}_1^*$. We show the analytical form of $\mathrm{Var}[\lambda_1(\bm{\epsilon})]$ in Eq.~\eqref{eq:score_var}: 
\begin{equation}
    \label{eq:x:secondmeasure}
    \mathrm{Var}[\lambda_1(\bm{\epsilon})] = \mathrm{Var}[\bm{p}_1^{*T} \mathrm{vech}^{-1}(D^{-1}A^+\bm{\epsilon}) \bm{p}_1^*] \notag.
\end{equation}
For simplicity, we write $Z=\mathrm{vech}^{-1}(D^{-1}A^+\bm{\epsilon})$.
Note that $Z$ is a symmetric matrix and satisfies $\mathbb{E}[Z]=O$ because $\mathbb{E}[D^{-1}A^+\bm{\epsilon}]=D^{-1}A^+\mathbb{E}[\bm{\epsilon}]=\bm{0}$. 
Thus, using Appendix~\ref{apdx:OrthogoIntegral}, 
\begin{equation}
    \label{eq:x:use_orthogo}
    \mathrm{Var}[\bm{p}_1^{*T} Z \bm{p}_1^*] = \\
    \frac{
    \sum_{i,j} \mathbb{E}[ Z_{ii} Z_{jj} ]
    + 2 \mathbb{E}[ Z_{ij}^2]
    }{d(d+2)}.
\end{equation}
Then, we deal with the first term $\sum_{i,j} \mathbb{E}[ Z_{ii} Z_{jj} ]$ and the second term $\sum_{i,j} \mathbb{E}[ Z_{ij}^2 ]$, separately.
To this end, we introduce some useful representations. 
We note 
Eq.~\eqref{eq:H_hq} can be rewritten as
\begin{eqnarray}
    \left<H\right> 
    &=& h(\bm{q})^T ~\bm{g} \notag\\
    &=& \left(\bm{q}^T\otimes\bm{q}^T\right) \mbox{vec}(G).
\end{eqnarray}
Here $\bm{q}$ is the parameter of the target single-qubit gate, $G$ is the FQS matrix, and $\mbox{vec}: {\mathbb R^{d\times d}} \rightarrow {\mathbb R
}^{d^2}$ is the vectorization operator for matrices.

Next, we introduce a linear transformation $L\in \mathbb R^{d^2 \times d(d+1)/2}$ between the vector $\bm{g}$ and $\mbox{vec}(G)$ in the Rx, the Fraxis, and the FQS gates as

\begin{equation}
    \normalsize
    \!\! L \!=\!\!
    {\scriptsize
    \begin{bmatrix}
    1 & 0 & 0 \\
    0 & 0 & c \\
    0 & 0 & c \\
    0 & 1 & 0
    \end{bmatrix} 
    }
    {\scriptsize ,~
    \begin{bmatrix}
    1 & 0 & 0 & 0 & 0 & 0 \\
    0 & 0 & 0 & c & 0 & 0 \\
    0 & 0 & 0 & 0 & c & 0 \\
    0 & 0 & 0 & c & 0 & 0 \\
    0 & 1 & 0 & 0 & 0 & 0 \\
    0 & 0 & 0 & 0 & 0 & c \\
    0 & 0 & 0 & 0 & c & 0 \\
    0 & 0 & 0 & 0 & 0 & c \\
    0 & 0 & 1 & 0 & 0 & 0
    \end{bmatrix}}
    {\scriptsize ,~
    \begin{bmatrix}
    1 & 0 & 0 & 0 & 0 & 0 & 0 & 0 & 0 & 0\\ 
    0 & 0 & 0 & 0 & c & 0 & 0 & 0 & 0 & 0\\ 
    0 & 0 & 0 & 0 & 0 & c & 0 & 0 & 0 & 0\\ 
    0 & 0 & 0 & 0 & 0 & 0 & c & 0 & 0 & 0\\ 
    0 & 0 & 0 & 0 & c & 0 & 0 & 0 & 0 & 0\\ 
    0 & 1 & 0 & 0 & 0 & 0 & 0 & 0 & 0 & 0\\ 
    0 & 0 & 0 & 0 & 0 & 0 & 0 & c & 0 & 0\\ 
    0 & 0 & 0 & 0 & 0 & 0 & 0 & 0 & c & 0\\ 
    0 & 0 & 0 & 0 & 0 & c & 0 & 0 & 0 & 0\\ 
    0 & 0 & 0 & 0 & 0 & 0 & 0 & c & 0 & 0\\ 
    0 & 0 & 1 & 0 & 0 & 0 & 0 & 0 & 0 & 0\\ 
    0 & 0 & 0 & 0 & 0 & 0 & 0 & 0 & 0 & c\\ 
    0 & 0 & 0 & 0 & 0 & 0 & c & 0 & 0 & 0\\ 
    0 & 0 & 0 & 0 & 0 & 0 & 0 & 0 & c & 0\\ 
    0 & 0 & 0 & 0 & 0 & 0 & 0 & 0 & 0 & c\\ 
    0 & 0 & 0 & 1 & 0 & 0 & 0 & 0 & 0 & 0   
    \end{bmatrix}},
\end{equation} respectively, where $c=1/\sqrt{2}$.
Note that the transformation satisfies
\begin{eqnarray}
    L^T L 
    &=& I, \\
    \mbox{vec}(G) 
    &=& L \bm{g}, 
\end{eqnarray}
And so,
\begin{eqnarray}
    \label{eq:x:hmql}
    (\bm{q}^T\otimes\bm{q}^T) L 
    &=& h(\bm{q})^T.
\end{eqnarray}
We may also consider the inverse transformation of vectorization $\mathrm{vec}^{-1}: {\mathbb R
}^{d^2} \rightarrow {\mathbb R^{d\times d}}$ as 
\begin{equation}\label{eq:vec-1}
    \mbox{vec}^{-1}(L\bm{g}) 
    = G~~~\forall \bm{g} \in {\mathbb R}^{d(d+1)/2},
\end{equation}
and $\mathrm{vech}^{-1}: {\mathbb R}^{d^2} \rightarrow {\mathbb R^{d\times d}}$,
\begin{equation}\label{eq:vech-1}
   \mathrm{vech}^{-1}(D^{-1}\bm{g})=G~~~\forall \bm{g} \in {\mathbb R}^{d(d+1)/2}.
\end{equation}
This leads to
\begin{equation}
    Z = \mathrm{vech}^{-1}(D^{-1}A^+\bm{\epsilon}) = \mathrm{vec}^{-1}(LA^+\bm{\epsilon}).
\end{equation}
Then, the first term in Eq.~\eqref{eq:x:use_orthogo} is rewritten as
\begin{eqnarray}\label{eq:ZiiZss}
    && \sum_{i,j} \mathbb{E}[ Z_{ii} Z_{jj} ] \notag \\
    &=& \mathbb{E}[~\sum_{i,j}
    \mathrm{vec}^{-1}(LA^+\bm{\epsilon})_{ii}
    \mathrm{vec}^{-1}(LA^+\bm{\epsilon})_{jj} ]\notag \\
    &=& \mathbb{E}[ (\mbox{vec}(I)^T LA^+\bm{\epsilon})(\mbox{vec}(I)^T LA^+\bm{\epsilon})]\notag\\
    &=& \frac{\sigma^2}{s} (\mbox{vec}(I)^T LA^+ \cdot\mbox{vec}(I)^T LA^+) \notag \\
    &=& \frac{\sigma^2 }{s}
    \mbox{vec}(I)^T LA^+ (A^+)^{T} L^T \mbox{vec}(I)\notag \\
    &=& \frac{\sigma^2}{s} \bm{1}_d^T A^+ (A^+)^{T} \bm{1}_d\\ 
    &=& \frac{\sigma^2}{s} \bm{1}_d^T (A^TA)^{-1} \bm{1}_d\\
\end{eqnarray}
where $I$ is the identity matrix, and $\bm{1}_d:=L^T \mbox{vec}(I)$ is the vector in $\mathbb{R}^{d(d+1)/2}$ whose first $d$ elements is unity and the rest are zero.
For the sixth equality, we used the folloing relation,
\begin{eqnarray}\label{eq:A+A+T}
    A^+(A^+){^T} 
    &=& ((A^TA)^{-1}A^T)((A^TA)^{-1}A^T)^T \notag \\
    &=& (A^TA)^{-1} A^TA (A^TA)^{-1}\notag\\
    &=& (A^TA)^{-1}.
\end{eqnarray}

The second term in Eq.~\eqref{eq:x:use_orthogo} is also rewritten as
\begin{eqnarray}\label{eq:ZijZij}
    &&\sum_{i,j} \mathbb{E}[Z_{ij}^2 ] \notag \\
    &=& \mathbb{E}[~\sum_{i,j} \mathrm{vec}^{-1}(LA^+\bm{\epsilon})_{ij}\mathrm{vec}^{-1}(LA^+\bm{\epsilon})_{ij} ]\notag \\
   &=&\mathbb{E}[ (LA^+\bm{\epsilon})^T(LA^+\bm{\epsilon})]\notag\\
   &=& \frac{\sigma^2}{s} \mathrm{Tr}[(LA^+)^T(LA^+) ] \notag\\
    &=& \frac{\sigma^2}{s} \mathrm{Tr} [A^+(A^+)^{T}L^TL]\notag\\
    &=& \frac{\sigma^2}{s} \mathrm{Tr} [(A^TA)^{-1}],
\end{eqnarray}

Summarizing Eqs.~\eqref{eq:x:use_orthogo},\eqref{eq:ZiiZss},and \eqref{eq:ZijZij},
$\mathrm{Var}[\lambda_1(\bm{\epsilon})]$ is expressed as

\begin{align}
    &\mathrm{Var}[\lambda_1]\notag\\
    &=\frac{\sigma^2}{s d(d+2)} \left(
    \bm{1}_d^T (A^TA)^{-1} \bm{1}_d +
    2\mathrm{Tr}[ (A^TA)^{-1} ]\
    \right) \notag\\
    &= \frac{\sigma^2}{sd(d+2)}
    \mathrm{Tr}[ (A^TA)^{-1}( \bm{1}_d \bm{1}_d^T + 2I)],
    \label{eq:x:ecost_final}
\end{align}
where the following identity:
\begin{equation}
    \bm{1}_d^T (A^TA)^{-1} \bm{1}_d = \mathrm{Tr}[(A^TA)^{-1} \bm{1}_d \bm{1}_d^T]
\end{equation}
is employed for the last equality.

\subsubsection{Discussion of \texorpdfstring{$\mathbb{E}[\Delta E]$}{} for the perturbation effect}
\label{apdx:alternative_measure}

Using the second-order perturbation theory of matrix \cite{cha2018perturbation}, the energy error $\Delta E$ is approximated as 
\begin{equation} \label{eq:DeltaE}
    \Delta E = \sum_{i>1}^d \frac{ (\bm{p}_i^{*T} \mathrm{vech}^{-1}(D^{-1}A^+\bm{\epsilon}) \bm{p}_1^* )^2 }{ \lambda_i^* - \lambda_1^* }.
\end{equation}
Note that Eq.~\eqref{eq:DeltaE} is not applicable when the lowest-energy eigenstate is degenerated. 
However, the following argument has been found to hold well experimentally. 
This equation leads to:
\begin{equation} \label{eq:mean_delta_E}
    \mathbb{E}\left[\Delta E \right] = \mathbb{E}\left[
    \sum_{i>1}^d \frac{ (\bm{p}_i^{*T} \mathrm{vech}^{-1}(D^{-1}A^+\bm{\epsilon}) \bm{p}_1^* )^2 }{ \lambda_i^* - \lambda_1^* }
    \right].
\end{equation}
However, unlike $\mathrm{Var}[\bm{p}_1(\bm{\epsilon})]$,
this measure also depends on the probability distribution $f(\lambda_1^*,...,\lambda_d^*)$ of the eigenvalues of the matrix $G$.
Assuming these eigenvalues are independent of each other, that is,
\begin{equation}
    f(\lambda_1^*,...,\lambda_M^*) = \prod_{i=1}^d f(\lambda_i^*),
\end{equation}
and the matrix of the eigenvectors $P=(\bm{p}_1,...,\bm{p}_d)^T$ is a random orthogonal,
Eq.~\eqref{eq:mean_delta_E} can be written as
\begin{align}
    \mathbb{E}\left[\Delta E \right]
    =& \sum_{i>1}^d
    \mathbb{E}\left[
    \frac{1}{ \lambda_i^* - \lambda_1^* }
    \right]
    \mathbb{E}\left[
    (\bm{p}_i^{*T} \mathrm{vech}^{-1}(D^{-1}A^+\bm{\epsilon}) \bm{p}_1^* )^2
    \right] \notag\\
    =& k\mathbb{E}\left[
    (\bm{p}_2^{*T} \mathrm{vech}^{-1}(D^{-1}A^+\bm{\epsilon}) \bm{p}_1^* )^2
    \right],
\end{align}
where $k := \sum_{i>1}^d \mathbb{E}[ (\lambda_i^* - \lambda_1^*)^{-1}]$.
This means the measure $\mathbb{E}\left[\Delta E \right]$ can be evaluated with some modeling of the true FQS matrix $G$ and the measurement errors $\bm{\epsilon}$.

For simplicity, we now write $Z=\mathrm{vech}^{-1}(D^{-1}A^+\bm{\epsilon})$.
From Eq.\eqref{apdx:pZpNot_variance} in Appendix~\ref{apdx:OrthogoIntegral}
, 
\begin{align}
    \mathbb{E}\left[\Delta E \right]
    &=k\sum_{i,j} \frac{d \mathbb{E}[Z_{ij}^2]-\mathbb{E}[Z_{ii}Z_{jj}]}{d(d-1)(d+2)}\\
    &= \frac{k\sigma^2}{sd(d-1)(d+2)} \notag \\
    &~\times\left(d~\mathrm{Tr}[(A^TA)^{-1}]  -\bm{1}_d^T (A^TA)^{-1}\bm{1}_d\right)\notag\\
    &= \frac{k\sigma^2}{sd(d-1)(d+2)} \notag \\
    &~\times \mathrm{Tr}[(A^TA)^{-1}(dI -\bm{1}_d\bm{1}_d^T)] \label{eq:e_de_analy}
\end{align}
In addition, if $C(A)=1$, i.e. the case of theoretical lower bound, $A^TA=\frac{N}{d(d+2)}(2I+\bm{1}_d\bm{1}_d^T)$ holds from Theorem~\ref{theorem:extended_C-cost}. 
As a result, we obtain
\begin{align}
    \mathbb{E}\left[\Delta E \right] =
    \frac{k\sigma^2}{s} \frac{ d(d+2) }{4N} \label{eq:e_de_analy_Ceq1},
\end{align}
where we used the following relation,
\begin{align}
    &\mathrm{Tr}\left[(A^TA)^{-1}(dI -\bm{1}_d\bm{1}_d^T)\right] \notag\\
    &= \frac{d(d+2)}{N}\mathrm{Tr}\left[(2I+\bm{1}_d\bm{1}_d^T)^{-1}(dI -\bm{1}_d\bm{1}_d^T)\right] \notag \\
    &= \frac{d(d+2)}{N}\mathrm{Tr}\left[\left(\frac{1}{2}I-\frac{1}{2(d+2)}\bm{1}_d\bm{1}_d^T\right)(dI -\bm{1}_d\bm{1}_d^T)\right] \notag\\ 
    &= \frac{d(d+2)}{N}\mathrm{Tr}\left[\frac{d}{2}I-\frac{1}{2}\bm{1}_d\bm{1}_d^T\right] \notag\\ 
    &= \frac{d(d+2)}{N} \frac{d}{2} (N_\mathrm{min}-1) \notag\\ 
    &= \frac{ d^2(d+2)^2 (d-1) }{4N}. 
\end{align}

\subsection{Proof of Theorem~\ref{theorem:extended_C-cost} and Corollary~\ref{thm:equiangular}}\label{sec:ecost_theorems}

\newtheorem{lemma}{Lemma}

In this section, we first present useful lemmas to prove Theorem~\ref{theorem:extended_C-cost} and its Corollary~\ref{thm:equiangular} that allow for analytical calculation of the optimal bound of the C-cost. 

The first lemma is trivial from the singular-value decomposition of a matrix $A = U\Sigma V^T$, where $U, V$ are orthogonal matrices, and $\Sigma$ is the diagonal matrix that contains the singular values of $A$. 
\begin{lemma}\label{lemma:XX_eigenvalues}
Let $A$ be a real matrix. The multiset of non-zero eigenvalues of $AA^T$ is the same as the multiset of non-zero eigenvalues of $A^TA$. 
\end{lemma}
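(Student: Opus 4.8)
The plan is to prove Lemma~\ref{lemma:XX_eigenvalues} directly from the singular value decomposition, as the statement itself hints. Let $A \in \mathbb{R}^{m \times n}$ be an arbitrary real matrix and write its singular value decomposition as $A = U \Sigma V^T$, where $U \in \mathbb{R}^{m \times m}$ and $V \in \mathbb{R}^{n \times n}$ are orthogonal matrices ($U^T U = I$, $V^T V = I$) and $\Sigma \in \mathbb{R}^{m \times n}$ is the rectangular diagonal matrix whose diagonal entries $\sigma_1, \ldots, \sigma_{\min(m,n)}$ are the singular values of $A$.

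The key computation is to substitute this decomposition into both Gram matrices and simplify using orthogonality. First I would compute
\begin{align}
    A^T A &= (U\Sigma V^T)^T (U\Sigma V^T) \notag \\
    &= V \Sigma^T U^T U \Sigma V^T \notag \\
    &= V (\Sigma^T \Sigma) V^T,
\end{align}
where $U^T U = I$ was used. Similarly,
\begin{align}
    A A^T &= (U\Sigma V^T)(U\Sigma V^T)^T \notag \\
    &= U \Sigma V^T V \Sigma^T U^T \notag \\
    &= U (\Sigma \Sigma^T) U^T.
\end{align}
These identities exhibit $A^T A$ and $A A^T$ as orthogonally similar to the diagonal matrices $\Sigma^T \Sigma \in \mathbb{R}^{n \times n}$ and $\Sigma \Sigma^T \in \mathbb{R}^{m \times m}$, respectively. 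Since orthogonal similarity preserves the spectrum, the eigenvalues of $A^T A$ (with multiplicity) are exactly the diagonal entries of $\Sigma^T \Sigma$, and those of $A A^T$ are the diagonal entries of $\Sigma \Sigma^T$.

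The final step is to observe that both $\Sigma^T \Sigma$ and $\Sigma \Sigma^T$ are diagonal with the same nonzero diagonal entries, namely $\sigma_1^2, \ldots, \sigma_r^2$ where $r = \mathrm{rank}(A)$; they differ only in the number of trailing zeros padding them out to size $n$ and $m$ respectively. Hence the \emph{nonzero} eigenvalues of $A^T A$ and $A A^T$ coincide as multisets, each being $\{\sigma_1^2, \ldots, \sigma_r^2\}$, which is precisely the claim. I do not anticipate a genuine obstacle here, as the lemma is elementary; the only point requiring mild care is the bookkeeping of dimensions to confirm that the two diagonal matrices share identical nonzero entries and differ solely in their count of zero eigenvalues, so that the restriction to \emph{nonzero} eigenvalues is exactly what makes the two multisets agree.
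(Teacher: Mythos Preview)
Your proposal is correct and follows exactly the approach the paper indicates: the paper does not spell out a proof but simply remarks that the lemma ``is trivial from the singular-value decomposition of a matrix $A = U\Sigma V^T$,'' which is precisely the route you take and flesh out in detail.
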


\begin{lemma}\label{lemma:identity_ext}
Let $A$ be a real symmetric matrix such that one of its eigenvalues is $a$ and the rest are $b$'s. Then, it holds that $A = (a-b) \bm{u}\bm{u}^T + b I$ where $\bm{u}$ is the (normalized) eigenvector corresponding to the eigenvalue $a$.
\end{lemma}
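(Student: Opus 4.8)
The plan is to read off the statement directly from the spectral theorem together with the resolution of identity. Since $A$ is an $n\times n$ real symmetric matrix, it admits an orthonormal eigenbasis $\{\bm{u},\bm{v}_2,\dots,\bm{v}_n\}$ of $\mathbb{R}^n$, where $\bm{u}$ is the normalized eigenvector associated with the eigenvalue $a$. The hypothesis that ``the rest are $b$'s'' means that every one of the remaining $n-1$ eigenvalues equals $b$, so $\bm{v}_2,\dots,\bm{v}_n$ span the $b$-eigenspace, which is precisely the orthogonal complement $\mathrm{span}(\bm{u})^{\perp}$. The spectral decomposition therefore reads $A = a\,\bm{u}\bm{u}^T + b\sum_{i=2}^{n}\bm{v}_i\bm{v}_i^T$.

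Next I would invoke the completeness relation $\bm{u}\bm{u}^T + \sum_{i=2}^{n}\bm{v}_i\bm{v}_i^T = I$, valid because $\{\bm{u},\bm{v}_2,\dots,\bm{v}_n\}$ is orthonormal. Substituting $\sum_{i=2}^{n}\bm{v}_i\bm{v}_i^T = I - \bm{u}\bm{u}^T$ into the decomposition yields $A = a\,\bm{u}\bm{u}^T + b(I-\bm{u}\bm{u}^T) = (a-b)\bm{u}\bm{u}^T + bI$, which is the claim.

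Alternatively — and this may be cleaner to typeset — I would define $B := (a-b)\bm{u}\bm{u}^T + bI$ and verify directly that $B$ has the same action as $A$ on the eigenbasis: $B\bm{u} = (a-b)\bm{u}(\bm{u}^T\bm{u}) + b\bm{u} = a\bm{u}$ using $\bm{u}^T\bm{u}=1$, and $B\bm{w} = (a-b)\bm{u}(\bm{u}^T\bm{w}) + b\bm{w} = b\bm{w}$ for every $\bm{w}$ orthogonal to $\bm{u}$ using $\bm{u}^T\bm{w}=0$. Since $A$ and $B$ are both symmetric and agree on a basis of $\mathbb{R}^n$, they coincide.

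There is no genuine obstacle here; the statement is elementary linear algebra. The only point requiring a word of care is the explicit observation that the $b$-eigenspace is exactly $\mathrm{span}(\bm{u})^{\perp}$ — i.e. that all $n-1$ remaining eigenvalues equal $b$ — since this is what makes the second spectral projector collapse to $I-\bm{u}\bm{u}^T$. (This lemma is then used downstream, with $A$ instantiated as $A^TA$ or $AA^T$, to characterize the configurations attaining $C(A)=1$.)
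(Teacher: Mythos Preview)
Your proposal is correct, and your second alternative---defining $B := (a-b)\bm{u}\bm{u}^T + bI$ and checking $B\bm{u}=a\bm{u}$, $B\bm{w}=b\bm{w}$ for $\bm{w}\perp\bm{u}$---is exactly the argument the paper gives. The spectral-decomposition route you offer first is equally valid and only cosmetically different.
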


\begin{proof}
Easy by seeing that $A\bm{u} = a \bm{u}$, and $A\bm{v} = b \bm{v}$ hold for every $\bm{v}$ which is orthogonal to $\bm{u}$, i.e., $\bm{v}^T \bm{u} = 0$. 
\end{proof}

Let $A$ be an $n\times n$ positive definite matrix with the largest eigenvalue $\lambda_{\max}$ and the smallest eigenvalue $\lambda_{\min}$ such that $\kappa = \lambda_{\max} / \lambda_{\min}$. It is known that $n^2/\kappa \le \mbox{Tr}\left(A\right)~\mbox{Tr}\left(A^{-1}\right) \le n^2 \kappa$ holds with equality if and only if $\kappa = 1$, i.e., $A = \lambda I$ for some $\lambda > 0$. We formalize this in the following lemma. 
\begin{lemma}\label{lemma:trace_and_inverse}
Any positive-definite real symmetric matrix $A\in \mathbb{R}^{n\times n}$ satisfies $\Tr{A^{-1}} \geq n^2\Tr{A}^{-1}$ with equality if and only if $A= \lambda I$ for $\lambda > 0$.
\end{lemma}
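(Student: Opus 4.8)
The plan is to reduce everything to the eigenvalues of $A$ and then invoke the Cauchy--Schwarz inequality. Since $A$ is real symmetric and positive definite, the spectral theorem guarantees it is orthogonally diagonalizable with strictly positive eigenvalues $\mu_1,\dots,\mu_n > 0$. Then $\Tr{A} = \sum_{i=1}^n \mu_i$, and since $A^{-1}$ has eigenvalues $\mu_i^{-1}$, also $\Tr{A^{-1}} = \sum_{i=1}^n \mu_i^{-1}$. So the claim is equivalent to $\bigl(\sum_i \mu_i\bigr)\bigl(\sum_i \mu_i^{-1}\bigr) \ge n^2$ for positive reals.

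Next I would apply Cauchy--Schwarz to the vectors with entries $\sqrt{\mu_i}$ and $1/\sqrt{\mu_i}$:
\begin{equation}
n^2 = \left( \sum_{i=1}^n \sqrt{\mu_i}\cdot\frac{1}{\sqrt{\mu_i}} \right)^{\!2} \le \left( \sum_{i=1}^n \mu_i \right)\!\left( \sum_{i=1}^n \frac{1}{\mu_i} \right) = \Tr{A}\,\Tr{A^{-1}}.
\end{equation}
Dividing by $\Tr{A} > 0$ yields $\Tr{A^{-1}} \ge n^2 \Tr{A}^{-1}$, the claimed bound. (Equivalently, this is the AM--HM inequality applied to the $\mu_i$, or Jensen's inequality for the convex map $x\mapsto 1/x$; any of these phrasings would do.)

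For the equality case, Cauchy--Schwarz is tight exactly when the two vectors are proportional, i.e., when $1/\sqrt{\mu_i} = c\sqrt{\mu_i}$ for all $i$ and a fixed constant $c$, which forces $\mu_1 = \dots = \mu_n =: \lambda > 0$. A diagonalizable matrix all of whose eigenvalues equal $\lambda$ must equal $\lambda I$ (this is the degenerate $a=b$ case of Lemma~\ref{lemma:identity_ext}). Conversely, $A=\lambda I$ gives $\Tr{A}\,\Tr{A^{-1}} = (n\lambda)(n/\lambda) = n^2$, so equality holds. I do not expect any genuine obstacle here; the only point needing a little care is making the equality equivalence airtight, in particular using diagonalizability to pass from ``all eigenvalues equal $\lambda$'' back to $A = \lambda I$ rather than merely to a matrix similar to $\lambda I$.
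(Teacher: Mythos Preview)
Your argument is correct. The paper does not actually prove this lemma in detail: it simply remarks that the Kantorovich-type bound $n^2/\kappa \le \Tr{A}\,\Tr{A^{-1}} \le n^2\kappa$ (with $\kappa = \lambda_{\max}/\lambda_{\min}$) is known, and that equality at the lower end forces $\kappa = 1$, i.e., $A = \lambda I$. Your Cauchy--Schwarz / AM--HM argument on the eigenvalues is precisely the standard way to establish the lower side of that bound, so the two approaches are essentially the same, with yours being more self-contained.
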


We now prove Theorem~\ref{theorem:extended_C-cost} and its Corollary~\ref{thm:equiangular} concerning lower bounds and its equality conditions for C-cost. 
Here we revisit Theorem~\ref{theorem:extended_C-cost} for convenience.

\begin{theomain}
Suppose a single-qubit gate expressed by a parameter $\bm{q}$ in $\mathbb{R}^{d}$ where $|\bm{q}|=1$.
Let $\{\bm{q}_1, \cdots, \bm{q}_N \}$ be a parameter configuration and let $A$ be the corresponding matrix $A=[h(q_1),\cdots,h(q_N)]^T$ in $\mathbb{R}^{N\times N_\mathrm{min}}$ where $N\ge N_\mathrm{ min}\equiv d(d+1)/2$.
The C-cost $C(A)$ defined as
\begin{align} \label{eq:detailed_ex_cost}
    C(A) 
    &=\frac{N}{N_{\mathrm{min}}d(d+2)}
    {\rm Tr}[ (A^TA)^{-1}(\bm{1}_d\bm{1}_d^T+2I)]
\end{align}
satisfies $C(A) \geq 1$ 
with equality if and only if the parameter configuration $\{ \bm{q}_i \}$ and $A$ satisfy
\begin{equation}
    A^TA = \frac{N}{d(d+2)} (\bm{1}_d\bm{1}_d^T + 2I).
\end{equation}

\end{theomain}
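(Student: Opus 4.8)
The plan is to isolate the only two consequences of the normalization of $h$ that the bound actually needs, fold them into a single scalar identity, and then close with the trace--inverse inequality of Lemma~\ref{lemma:trace_and_inverse} applied to a rescaled copy of $M:=A^TA$. Write $W:=\bm{1}_d\bm{1}_d^T+2I\in\mathbb{R}^{N_{\min}\times N_{\min}}$, which is positive definite with $\mathrm{Tr}(W)=d+2N_{\min}=d(d+2)$. First I would note that since every $\bm{q}_i$ is a unit vector, $\|h(\bm{q}_i)\|^2=\|\bm{q}_i\|^4=1$, so all diagonal entries of $AA^T$ equal $1$ and $\mathrm{Tr}(AA^T)=N$; Lemma~\ref{lemma:XX_eigenvalues} (or simply cyclicity of the trace) then gives $\mathrm{Tr}(M)=N$. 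Similarly $\bm{1}_d^T h(\bm{q}_i)=\|\bm{q}_i\|^2=1$ for each $i$, hence $A\bm{1}_d$ is the all-ones vector in $\mathbb{R}^N$ and $\bm{1}_d^T M\bm{1}_d=\|A\bm{1}_d\|^2=N$.

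Next I would combine these into one statement about $W$. A Sherman--Morrison computation gives $W^{-1}=\tfrac12 I-\tfrac{1}{2(d+2)}\bm{1}_d\bm{1}_d^T$, so that $\mathrm{Tr}(MW^{-1})=\tfrac12\mathrm{Tr}(M)-\tfrac{1}{2(d+2)}\bm{1}_d^T M\bm{1}_d=\tfrac{N(d+1)}{2(d+2)}$. The point is that, although neither identity alone constrains $M$ by more than a one-parameter family, together they determine the single number $\mathrm{Tr}(MW^{-1})$ exactly. Now apply Lemma~\ref{lemma:trace_and_inverse} to the positive-definite matrix $W^{-1/2}MW^{-1/2}$: since $\mathrm{Tr}\left((W^{-1/2}MW^{-1/2})^{-1}\right)=\mathrm{Tr}(M^{-1}W)$ and $\mathrm{Tr}(W^{-1/2}MW^{-1/2})=\mathrm{Tr}(MW^{-1})$, the lemma yields $\mathrm{Tr}(M^{-1}W)\ge N_{\min}^2/\mathrm{Tr}(MW^{-1})=\tfrac{2N_{\min}^2(d+2)}{N(d+1)}$, with equality precisely when $W^{-1/2}MW^{-1/2}$ is a scalar multiple of $I$, i.e. when $M\propto W$.

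Substituting this into $C(A)=\tfrac{N}{N_{\min}d(d+2)}\,\mathrm{Tr}(M^{-1}W)$ and using $N_{\min}=d(d+1)/2$ makes the right-hand side collapse to $\tfrac{2N_{\min}}{d(d+1)}=1$, so $C(A)\ge1$. For the equality case, $M=\mu W$ with $\mu>0$; taking traces and using $\mathrm{Tr}(M)=N$ and $\mathrm{Tr}(W)=d(d+2)$ forces $\mu=N/(d(d+2))$, which is exactly the asserted condition $A^TA=\tfrac{N}{d(d+2)}(\bm{1}_d\bm{1}_d^T+2I)$; the converse is an immediate substitution using $\mathrm{Tr}(W^{-1}W)=N_{\min}$.

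I expect the main obstacle to be conceptual rather than computational. The two normalization identities look far too weak to produce a sharp bound---indeed, feeding either one separately into a Cauchy--Schwarz or Kantorovich-type estimate yields a strictly larger constant than $1$---so the crux is to recognize that they conspire to pin down $\mathrm{Tr}(MW^{-1})$, and then to apply the trace--inverse inequality not to $M$ but to its $W$-whitened form $W^{-1/2}MW^{-1/2}$. That particular choice is what synchronizes the equality case with $M\propto W$ and what makes the residual arithmetic in $N_{\min}$ produce exactly $1$.
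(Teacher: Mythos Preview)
Your proof is correct and follows essentially the same route as the paper: compute $W^{-1}$ via Sherman--Morrison, use the two normalization identities $\mathrm{Tr}(A^TA)=N$ and $\bm{1}_d^TA^TA\bm{1}_d=N$ to pin down $\mathrm{Tr}(MW^{-1})$, and then invoke Lemma~\ref{lemma:trace_and_inverse}. The one refinement worth noting is that you apply the lemma to the manifestly symmetric matrix $W^{-1/2}MW^{-1/2}$, whereas the paper applies it directly to $W^{-1}M$, which is not symmetric in general; your symmetrization is the cleaner way to stay within the hypotheses of Lemma~\ref{lemma:trace_and_inverse} as stated, though the two versions are of course similar matrices and the conclusion is the same.
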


\begin{proof}
Using the Woodbury matrix identity giving
\begin{align}\label{eq:targetval}
    \left( \bm{1}_d \bm{1}_d^T + 2I \right)^{-1} =  \left( \frac{1}{2}I - \frac{1}{2(d+2)} \bm{1}_d \bm{1}_d^T \right),
\end{align}
we obtain the lower bound of Eq.~\eqref{eq:targetval} as
\begin{eqnarray} \label{eq:WW_lb}
    && \Tr{ (A^T A)^{-1}(\bm{1}_d\bm{1}_d^T + 2I) } \nonumber \\
    &=& \Tr{ \left( \left(\bm{1}_d\bm{1}_d^T + 2I \right)^{-1} (A^T A)\right)^{-1} } \nonumber \\
    &=& \Tr{ \left( \frac{1}{2}A^T A - \frac{1}{2(d+2)} \bm{1}_d \bm{1}_d^T  A^T A\right)^{-1} } \nonumber \\
    &\geq& N_\mathrm{min}^2 \Tr{ \frac{1}{2}A^T A - \frac{1}{2(d+2)} \bm{1}_d \bm{1}_d^T A^T A}^{-1} \nonumber \\
    &=& \frac{N_\mathrm{min} d(d+2)}{N}
\end{eqnarray}
where the inequality in the fourth line is derived by Lemma~\ref{lemma:trace_and_inverse}.
To obtain the last line, we use $\Tr{A^T A} = \Tr{A A^T} = N$ and $\Tr{\bm{1}_d \bm{1}_d^T A^T A } = N$ as well as
 $N_\mathrm{min} = d(d+1)/2$.
Therefore, $C(A)\ge 1$.

According to Lemma~\ref{lemma:trace_and_inverse}, the equality in the fourth line in Eq.~\eqref{eq:WW_lb} is given as
\begin{align}
\label{eq:equality}
    \frac{1}{2}A^T A - \frac{1}{2(d+2)} \bm{1}_d \bm{1}_d^T A^T A
    = \lambda I,
\end{align}
where $\lambda$ is a constant.
Tracing over both sides of Eq.~\eqref{eq:equality}, we have
\begin{align}
    \lambda 
    = \frac{N}{d(d+2)}.
\end{align}
Therefore, $C(A)=1$ if and only if 
\begin{equation}\label{eq:proof_phase_1}
    A^T A = \frac{N}{d(d+2)} \left(\bm{1}_d \bm{1}_d^T + 2I \right).
\end{equation}
\end{proof}

\textit{
\begin{corollary}
    For the minimum number of parameters ($N=N_\mathrm{min}$),  
    it holds that $C(A) \geq 1$ with equality if and only if the parameter configurations $\{\bm{q}_i\}_{i=1}^{N}$ satisfy
\begin{equation} \label{eq:general_equidistance}
    |\bm{q}_i\cdot\bm{q}_{j}| = \frac{1}{\sqrt{d+2}}
     ~ (\mathrm{for ~all} ~ i\neq j).
\end{equation}    
\end{corollary}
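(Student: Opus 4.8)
The plan is to deduce the corollary directly from Theorem~\ref{theorem:extended_C-cost}, which already yields $C(A)\ge 1$ together with the equality criterion $A^{T}A=M$, where $M:=\frac{N}{d(d+2)}\!\left(\bm{1}_d\bm{1}_d^{T}+2I\right)$. Since $N=N_{\min}=d(d+1)/2$, the matrix $A\in\mathbb{R}^{N_{\min}\times N_{\min}}$ is square, so it suffices to show that, for such square $A$, the equation $A^{T}A=M$ holds if and only if $|\bm{q}_i\cdot\bm{q}_j|=1/\sqrt{d+2}$ for all $i\ne j$.

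First I would record two facts about the embedding $h$. (i) For any unit vectors $\bm{q},\bm{q}'$ one has $h(\bm{q})^{T}h(\bm{q}')=(\bm{q}\cdot\bm{q}')^{2}$, obtained by expanding $(\bm{q}\cdot\bm{q}')^{2}$ and matching terms with Eq.~\eqref{eq:hfunc} (the $\sqrt{2}$ factors absorb the doubled cross terms). Hence $(AA^{T})_{ij}=(\bm{q}_i\cdot\bm{q}_j)^{2}$; in particular $(AA^{T})_{ii}=1$, so the equiangular condition is exactly the matrix equation $AA^{T}=E$ with $E:=\frac{d+1}{d+2}I+\frac{1}{d+2}J$, where $J=\bm{j}\bm{j}^{T}$ and $\bm{j}\in\mathbb{R}^{N}$ is the all-ones vector. (ii) The first $d$ entries of $h(\bm{q})$ are the squares of the entries of $\bm{q}$, which sum to $\|\bm{q}\|^{2}=1$; therefore $h(\bm{q})^{T}\bm{1}_d=1$ for every $\bm{q}$, i.e.\ $A\bm{1}_d=\bm{j}$.

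The heart of the matter is the equivalence $A^{T}A=M \Leftrightarrow AA^{T}=E$ for square $A$ satisfying $A\bm{1}_d=\bm{j}$; both directions use the same ingredients. Using $\|\bm{1}_d\|^{2}=d$ one checks that $M$ has eigenvalue $N/d$ with multiplicity $1$ and $\tfrac{2N}{d(d+2)}$ with multiplicity $N-1$, that (for $N=d(d+1)/2$) these numbers equal $\tfrac{d+1}{2}$ and $\tfrac{d+1}{d+2}$ and are distinct, and that $E$ has the same spectrum. If $A^{T}A=M$, then $M$ is positive definite, so $A$ is invertible, Lemma~\ref{lemma:XX_eigenvalues} gives that $AA^{T}$ has the spectrum of $M$, and $AA^{T}\bm{j}=A(A^{T}A)\bm{1}_d=AM\bm{1}_d=\tfrac{N}{d}A\bm{1}_d=\tfrac{N}{d}\bm{j}$ identifies $\bm{j}$ as spanning the one-dimensional top eigenspace of $AA^{T}$; Lemma~\ref{lemma:identity_ext} then forces $AA^{T}=\bigl(\tfrac{N}{d}-\tfrac{2N}{d(d+2)}\bigr)\tfrac{\bm{j}\bm{j}^{T}}{N}+\tfrac{2N}{d(d+2)}I$, which simplifies to $E$. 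Conversely, if $AA^{T}=E$, then $A$ is invertible, $A^{T}A$ has the spectrum of $E$, and from $A^{T}A\,(A^{T}\bm{j})=A^{T}(AA^{T})\bm{j}=\tfrac{N}{d}A^{T}\bm{j}$ combined with $A^{T}\bm{j}=A^{T}A\,\bm{1}_d$ we see that $A^{T}A\,\bm{1}_d$ lies in the one-dimensional top eigenspace of $A^{T}A$; writing $A^{T}A=c\,\bm{w}\bm{w}^{T}+\tfrac{2N}{d(d+2)}I$ via Lemma~\ref{lemma:identity_ext} and imposing $A^{T}A\,\bm{1}_d\parallel\bm{w}$ forces $\bm{w}\parallel\bm{1}_d$, whence $A^{T}A=\tfrac{N}{d(d+2)}(\bm{1}_d\bm{1}_d^{T}+2I)=M$.

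Chaining these equivalences gives $C(A)=1 \Leftrightarrow A^{T}A=M \Leftrightarrow AA^{T}=E \Leftrightarrow |\bm{q}_i\cdot\bm{q}_j|=1/\sqrt{d+2}$ for all $i\ne j$. The delicate step is the one just highlighted: a matching spectrum by itself is not enough, and one must use $A\bm{1}_d=\bm{j}$ to pin the top eigenvector of $AA^{T}$ (resp.\ $A^{T}A$) to the all-ones direction before Lemma~\ref{lemma:identity_ext} can be applied; everything else is the scalar bookkeeping $\tfrac{2N}{d(d+2)}=\tfrac{d+1}{d+2}$ and $\tfrac{N}{d}=\tfrac{d+1}{2}$ valid at $N=d(d+1)/2$.
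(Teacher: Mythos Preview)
Your proof is correct and follows essentially the same approach as the paper: both reduce to Theorem~\ref{theorem:extended_C-cost}, invoke Lemmas~\ref{lemma:XX_eigenvalues} and~\ref{lemma:identity_ext} to transfer the spectrum between $A^TA$ and $AA^T$, and use the identities $h(\bm{q})^Th(\bm{q}')=(\bm{q}\cdot\bm{q}')^2$ and $A\bm{1}_d=\bm{j}$ to pin down the top eigenvector. Your identification of that eigenvector via the one-line computation $AA^T\bm{j}=A(A^TA)\bm{1}_d=AM\bm{1}_d=(N/d)\bm{j}$ (and its analogue in the converse direction) is a bit cleaner than the paper's component-wise summation, but the underlying strategy is the same.
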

}
\begin{proof}
We show
Eq.~\eqref{eq:proof_phase_1} is equivalent to Eq.\eqref{eq:general_equidistance} if $N=N_\mathrm{min}$.
We first show
\begin{align}
    A^T A &= \frac{N_\mathrm{min}}{d(d+2)} \left(\bm{1}_d \bm{1}_d^T + 2I \right) \notag\\ 
    \Longrightarrow ~~~
    |\bm{q}_i\cdot\bm{q}_{j}| &= \frac{1}{ \sqrt{(d+2)}}
     ~ (\mathrm{for ~all} ~ i\neq j) \notag.
\end{align}
Recall that  
$A = [\bm{h}(\bm{q}_1), \cdots, \bm{h}(\bm{q}_N) ]^T$.
If $N=N_\mathrm{min}$, both $AA^T$ and $A^TA$ lie in $\mathbb{R}^{N_\mathrm{min}\times N_\mathrm{min}}$,
$A^TA=\sum_i^{N_\mathrm{min}} \bm{h}(\bm{q}_i) \bm{h}(\bm{q}_i)^T$ 
and so
\begin{equation} \label{eq:hhT}
    \sum_{i=1}^{N_\mathrm{min}} \bm{h}(\bm{q}_i) \bm{h}(\bm{q}_i)^T =\frac{N_\mathrm{min}}{d(d+2)} \left(\bm{1}_d \bm{1}_d^T + 2I \right)
\end{equation}
Multiplying both sides by $\bm{1}_d$ from the right, we obtain  
\begin{equation}\label{eq:sum_h}
    \sum_{i=1}^{N_\mathrm{min}} \bm{h}(\bm{q}_i) = \frac{N_\mathrm{min}}{d} \bm{1}_d,
\end{equation}
According to Lemma~\ref{lemma:XX_eigenvalues}, $A^T A$ and $AA^T$ have the identical set of non-zero eigenvalues, i.e. one of eigenvalues is $d+2$ and the rest are 2.
Then, using Lemma~\ref{lemma:identity_ext},
$A A^T$ can be expressed as
\begin{equation} \label{eq:AAT}
    A A^T = \frac{N_\mathrm{min}}{d(d+2)} \left(d \bm{v} \bm{v}^T + 2I \right),
\end{equation}
where $\bm{v}\in\mathbb{R}^{N_\mathrm{min}}$ is a unit vector.
On the other hand, 
the $(i,j)$-component of $AA^T$ has a relation 
\begin{equation} \label{eq:hTh}
    (AA^T)_{ij}=\bm{h}(\bm{q}_i)^T \bm{h}(\bm{q}_j) 
    = \frac{N_\mathrm{min}}{d(d+2)} \left(d v_i v_j + 2 \delta_{ij} \right).
\end{equation}
Summing Eq.~\eqref{eq:hTh} over $j$ from 1 to $N_\mathrm{min}$ and using $\bm{h}(\bm{q})^T \bm{1}_d=1$ and Eq.~\eqref{eq:sum_h},  we obtain
\begin{equation} \label{eq:v}
    v_j = \frac{1}{\sum_i^{N_\mathrm{min}} v_i}.
\end{equation}
Since $\bm{v}$ is a unit vector, $\bm{v} = \pm \bm{1}_{N_\mathrm{min}} / \sqrt{N_\mathrm{min}}$, where $\bm{1}_{N_\mathrm{min}} \in \mathbb{R}^{N_\mathrm{min}}$ is a vector whose all elements are 1.
Therefore, 
\begin{align} \label{eq:hTh_condition}
    \bm{h}(\bm{q}_i)^T \bm{h}(\bm{q}_j) = \frac{1}{d+2}, \text{ for } i \neq j.
\end{align}
Using the relation $\bm{h}(\bm{q}_i)^T \bm{h}(\bm{q}_j) = (\bm{q}_i \cdot \bm{q}_j)^2$, we obtain
\begin{equation}
    | \bm{q}_i \cdot \bm{q}_{j(\neq i)} | = \frac{1}{\sqrt{d+2}}. 
\end{equation}

Next, we prove that 
\begin{align}
    |\bm{q}_i\cdot\bm{q}_{j}| 
    &= \frac{1}{ \sqrt{(d+2)}}
     ~ (\mathrm{for ~all} ~ i\neq j) \notag    \\
     \Longrightarrow ~~~
    A^T A &=     
    \frac{N}{d(d+2)} 
    \left(\bm{1}_d \bm{1}_d^T + 2I \right) \notag.
\end{align}
Using the relation $\bm{h}(\bm{q}_i)^T \bm{h}(\bm{q}_j) = (\bm{q}_i \cdot \bm{q}_j)^2$ and $|\bm{q}_i|^2=1$ again, we obtain
\begin{align} \label{eq:hhT_converse}
    \bm{h}(\bm{q}_i)^T \bm{h}(\bm{q}_j) = \frac{1}{d+2} + \frac{d+1}{d+2} \delta_{ij}.
\end{align}
Since $\bm{h}(\bm{q}_i)^T \bm{h}(\bm{q}_j)$ is the $(i, j)$-component of $A A^T$, we can write
\begin{align} \label{eq:AAT_converse}
    A A^T = \frac{N_\mathrm{min}}{d(d+2)} \left( \frac{d}{N_\mathrm{min}} \bm{1}_{N_\mathrm{min}} \bm{1}_{N_\mathrm{min}}^T + 2I \right),
\end{align}
because of $N_\mathrm{min} = d(d+1)/2$.
Using Eq.~\eqref{eq:AAT_converse}, Lemmas~\ref{lemma:XX_eigenvalues} and \ref{lemma:identity_ext}, we can write $A^T A$ as
\begin{equation} \label{eq:ATA_converse}
    A^T A = \frac{N_\mathrm{min}}{d(d+2)} \left(d \bm{v}' \bm{v}'^T + 2I \right),
\end{equation}
where $v'=[v'_1, \ldots, v'_{N_\mathrm{min}}]^T$ is a unit vector.
Since $A^T A = \sum_i \bm{h}(\bm{q}_i) \bm{h}(\bm{q}_i)^T$ and $\forall i, \bm{h}(\bm{q}_i)^T \bm{1}_d = 1$, multiplying Eq.~\eqref{eq:ATA_converse} by $\bm{1}_d$ from the right side yields
\begin{align} \label{eq:sum_h_converse}
    \sum_i \bm{h}(\bm{q}_i) = \frac{N_\mathrm{min}}{d(d+2)}\left( d \bm{v}' \bm{v}'^T \bm{1}_d + 2 \bm{1}_d \right).
\end{align}
Summing Eq.~\eqref{eq:hhT_converse} over $i$ and $j$ from 1 to $N_\mathrm{min}$, we obtain
\begin{align} \label{eq:sum_hhT_converse}
    \sum_{i, j} \bm{h}(\bm{q}_i)^T \bm{h}(\bm{q}_j) = \frac{N_\mathrm{min}^2}{d},
\end{align}
which further yields
\begin{align} \label{eq:v_converse}
    (\bm{1}_d^T \bm{v}')^2 = d,
\end{align}
by substituting Eq.~\eqref{eq:sum_h_converse} into Eq.~\eqref{eq:sum_hhT_converse} and rearranging the resultant equation.
Since $\| \bm{1}_d \| = \sqrt{d}$, Eq.~\eqref{eq:v_converse} means $\bm{v}' = \pm \bm{1}_d / \sqrt{d}$.
Therefore, Eq.~\eqref{eq:ATA_converse} becomes
\begin{equation}
    A^T A = \frac{N_\mathrm{min}}{d(d+2)} \left( \bm{1}_d \bm{1}_d^T + 2I \right),
\end{equation}
which is just the equality conditon of $C(A) = 1$.

\end{proof}

\subsection{Proof of rotation invariance of C-cost}\label{apdx:rotation_invariance}
Here, we prove the C-cost $C(A)$ is invariant with respect to the parameter rotations as 
\begin{equation}
    C(A) = C(A_R),
\end{equation}
where the subscript $R$ stands for the rotated parameter set.
Let $\{ \bm{q}_1, \bm{q}_2, ..., \bm{q}_N \}$ be the original parameter configuration.
Then, a rotation matrix $R \in SO(d)$ gives another parameter configutation $\{ R\bm{q}_1, R\bm{q}_2, ..., R\bm{q}_N \}$.
For convenience, we define a matrix $Q$ as\begin{equation} \label{eq:Q-matrix}
    Q = ( \bm{q}_1 \otimes \bm{q}_1, \bm{q}_2 \otimes \bm{q}_2, ..., \bm{q}_N \otimes \bm{q}_N )^T.
\end{equation}
Likewise,
\begin{eqnarray}
    Q_R &=& ( (R\bm{q}_1) \otimes (R\bm{q}_1), ..., (R\bm{q}_N) \otimes (R\bm{q}_N) )^T \notag \\
    &=& Q (R^T \otimes R^T).
\end{eqnarray}
Using Eq.~\eqref{eq:x:hmql} in Appendix \ref{apdx:analytical_form}, $Q$ is linked to $A$ as
\begin{equation}
    AL^T = Q ~\mathrm{and}~ A=QL.
\end{equation}
which implies $Q$ encodes the parameter configurations as well as $A$.
Thus, the matrix $A$ for the rotated parameter set is given as
\begin{equation}
    A_R L^T = Q(R^T \otimes R^T) ~\mathrm{and}~ A_R=Q(R^T \otimes R^T)L.
\end{equation}
From Eq.~\eqref{eq:ZiiZss} and \eqref{eq:ZijZij}, the C-cost contains the Gram matrix $A^TA$.
For the rotated parameter set, the corresponding Gram matrix is given as
\begin{eqnarray}
    A_R^TA_R &=& L^T (R\otimes R) Q^T Q (R^T\otimes R^T) L \\
    &=& L^T (R\otimes R) L A^T A L^T (R^T\otimes R^T) L \\
    &=& R_{L} A^TA R_{L}^T,
\end{eqnarray}
where we denote $R_{L}:=L^T (R\otimes R) L$.

In fact, the first and second terms of Eq.~\eqref{eq:x:use_orthogo} are independently invariant for parameter rotations as follows.

For the first term $\bm{1}_d (A^TA)^{-1} \bm{1}_d^T$ (Eq.~\eqref{eq:ZiiZss}), the rotated version of the first term is expanded as
\begin{eqnarray}
    \bm{1}_d^T (A^T_RA_R)^{-1} \bm{1}_d 
    &=& \bm{1}_d^T (R_{L} A^TA R_{L}^T)^{-1} \bm{1}_d \notag \\
    &=& (R_{L}^{-1} \bm{1}_d)^T (A^TA)^{-1} R_{L}^{-1} \bm{1}_d \notag \\
    &=& \bm{1}_d^T (A^TA)^{-1} \bm{1}_d,
\end{eqnarray}
where we use the fact $R_{L}^{-1} \bm{1}_d = \bm{1}_d$, which is easily derived as
\begin{eqnarray}
    R_{L}\bm{1}_d
    &=& L^T(R\otimes R) LL^T \mbox{vec}(I) \notag \\
    &=& L^T LL^T (R\otimes R) \mbox{vec}(I) \notag \\
    &=& L^T (R\otimes R) \mbox{vec}(I) \notag \\
    &=& L^T \mbox{vec}(R I R^T) \notag \\
    &=& L^T \mbox{vec}(I) \notag \\
    &=& \bm{1}_d,
\end{eqnarray}
where we employed $LL^T(R\otimes R)=(R\otimes R)LL^T$ and $L^TL=I$. 
This equations implies the first term is rotation invariant.

For the second term ${\rm Tr}[ (A^TA)^{-1} ]$ (Eq.~\eqref{eq:ZijZij}), the rotated version of the second term is expanded as
\begin{eqnarray}
    {\rm Tr}[ (A^T_RA_R)^{-1} ]
    &=& {\rm Tr}[  (R_{L} A^TA R_{L}^T)^{-1} ] \notag \\
    &=& {\rm Tr}[ (A^TAR_{L}^TR_{L})^{-1} ] \notag \\
    &=& {\rm Tr}[ (A^TA)^{-1} ]
\end{eqnarray}
where for the second equality $I = R_{L}^TR_{L}$ is employed, which is derived as
\begin{align}
    R_{L}^TR_{L} 
    =& L (R^T\otimes R^T) L L^T (R \otimes R) L \notag \\
    =& L^T L L^T(R\otimes R)(R^T \otimes R^T) L \notag \\
    =& L^T L L^T((RR^T)\otimes (RR^T)) L \notag \\
    =&  L^T L L^T L \notag \\
    =& I,
\end{align}
where we employed $LL^T(R\otimes R)=(R\otimes R)LL^T$ and $L^TL=I$.
This equations implies the second term is rotation invariant.
Consequently, the C-cost is rotation invariant, because both the two terms in the C-cost are rotation invariant. ($\mathbb{E}[\Delta E]$  is also rotation invariant, because it is the weighted sum of these two terms.) $\square$

\begin{figure*}[t]\includegraphics[width=0.9\linewidth]{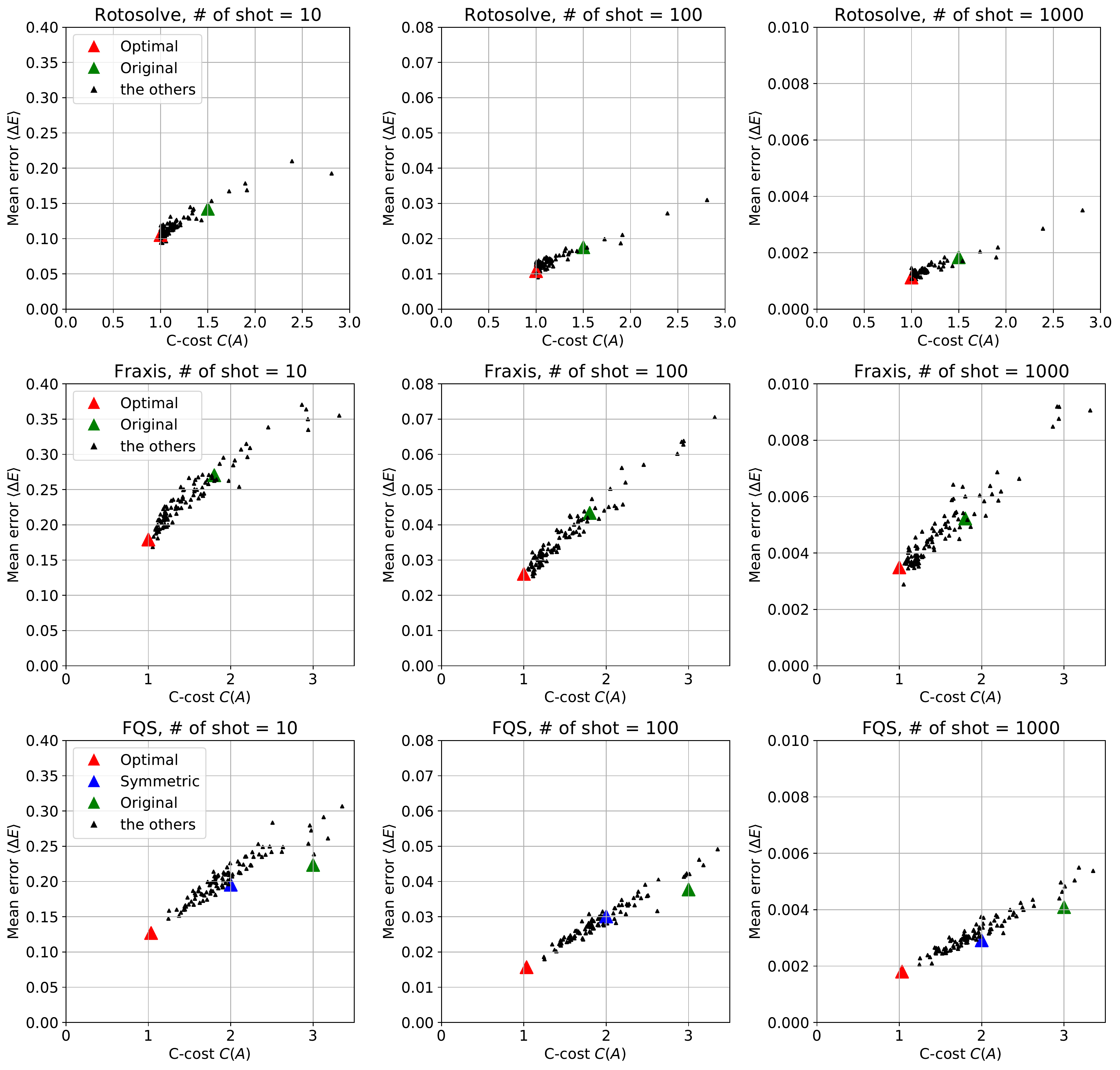}
    \caption{ \textbf{Additional results of the average energy error for the one-time optimization on different target gates in the ansatz with various parameter configurations.} }
    \label{ fig:appdxF}
\end{figure*}

\subsection{Comparison of our Parameters with the original methods.}\label{apdx:used_parameter}
We show the parameter values used as a sequential optimization in the main text as follows. The parameters are in no particular order.

\subsubsection{Rx gate}
The original parameter configuration for Rx gate 
$\bm{r}_1, \bm{r}_2, \bm{r}_3$ proposed in \cite{ostaszewski2021} is represented as
\begin{equation}
    \begin{bmatrix}
        \bm{r}^T_1 \\ \bm{r}^T_2 \\ \bm{r}^T_3
    \end{bmatrix}
     =
    \begin{bmatrix}
        1 & 0 \\
        \cos(\pi/4) & \sin(\pi/4) \\
        \cos(\pi/4) & -\sin(\pi/4) 
    \end{bmatrix}.
\end{equation}

The unique optimal parameter configuration for Rx gate with minimum number parameter set $\bm{r}_1, \bm{r}_2, \bm{r}_3$ is analytically derived as 
\begin{equation}
    \begin{bmatrix}
        \bm{r}^T_1 \\ \bm{r}^T_2 \\ \bm{r}^T_3
    \end{bmatrix}
     =
    \begin{bmatrix}
        1 & 0 \\
        \cos(\pi/3) & \sin(\pi/3) \\
        \cos(\pi/3) & -\sin(\pi/3) 
    \end{bmatrix}
\end{equation}
and its arbitrary rotation and (individual) reversal.

\subsubsection{Fraxis gate}
The original parameter configuration for Fraxis gate 
$\bm{n}_1, \bm{n}_2, ..., \bm{n}_6$ proposed in \cite{watanabe2021} is represented as
\begin{equation}
    \label{eq:orig_MC_Fraxis}
    \begin{bmatrix}
        \bm{n}^T_1 \\ \bm{n}^T_2 \\
        \bm{n}^T_3 \\ \bm{n}^T_4 \\
        \bm{n}^T_5 \\ \bm{n}^T_6
    \end{bmatrix}
     = \frac{1}{\sqrt{2}}
    \begin{bmatrix}
        \sqrt{2} & 0 & 0 \\
        0 & \sqrt{2} & 0 \\
        0 & 0 & \sqrt{2} \\
        1 & 1 & 0 \\
        1 & 0 & 1 \\
        0 & 1 & 1
    \end{bmatrix}.
\end{equation}

The unique (up to arbitrary rotation and individual reversal) optimal parameter configuration for Fraxis gate with minimum number parameter set $\bm{n}_1, \bm{n}_2, ..., \bm{n}_6$ is analytically derived as the vertices of icosahedron  
\begin{equation}
    \label{eq:opt_MC_Fraxis}
    \begin{bmatrix}
        \bm{n}^T_1 \\ \bm{n}^T_2 \\
        \bm{n}^T_3 \\ \bm{n}^T_4 \\
        \bm{n}^T_5 \\ \bm{n}^T_6
    \end{bmatrix}
     =
     \frac{1}{\sqrt{1+\varphi^2}}
     \begin{bmatrix}
     0 & 1 & \varphi\\
     0 & 1 & -\varphi\\
     1 & \varphi & 0 \\
     1 & -\varphi& 0\\
     \varphi & 0 & 1\\
     -\varphi& 0 & 1
     \end{bmatrix},
\end{equation}
where $\varphi = \frac{1+\sqrt{5}}{2}$ is the golden ratio. 

\subsubsection{FQS gate}
The original parameter configuration for FQS 
$\bm{q}_1, \bm{q}_2, ..., \bm{q}_{10}$ proposed in \cite{wada2022full} is represented as
\begin{equation}
    \label{eq:orig_MC_FQS}
    \begin{bmatrix}
        \bm{q}^T_1 \\ \bm{q}^T_2 \\
        \bm{q}^T_3 \\ \bm{q}^T_4 \\
        \bm{q}^T_5 \\ \bm{q}^T_6 \\
        \bm{q}^T_7 \\ \bm{q}^T_8 \\
        \bm{q}^T_9 \\ \bm{q}^T_{10}
    \end{bmatrix}
     =
     \frac{1}{\sqrt{2}}
    \begin{bmatrix}
        \sqrt{2} & 0 & 0 & 0 \\
        1 & -1 & 0 & 0 \\
        1 & 0 & -1 & 0 \\
        1 & 0 & 0 & -1 \\
        1 & 1 & 0 & 0 \\
        1 & 0 & 1 & 0 \\
        1 & 0 & 0 & 1 \\
        0 & 1 & 1 & 0 \\
        0 & 1 & 0 & 1 \\
        0 & 0 & 1 & 1 \\
    \end{bmatrix}.
\end{equation}

The symmetric parameter configuration for FQS gate 
$\bm{q}_1, \bm{q}_2, ..., \bm{q}_{10}$ which is only used for the experimental results in main text is represented as
\begin{equation}
    \label{eq:sym_MC_FQS}
    \begin{bmatrix}
        \bm{q}^T_1 \\ \bm{q}^T_2 \\
        \bm{q}^T_3 \\ \bm{q}^T_4 \\
        \bm{q}^T_5 \\ \bm{q}^T_6 \\
        \bm{q}^T_7 \\ \bm{q}^T_8 \\
        \bm{q}^T_9 \\ \bm{q}^T_{10}
    \end{bmatrix}
     =
     \frac{1}{\sqrt{2}}
    \begin{bmatrix}
        \sqrt{2} & 0 & 0 & 0 \\
        0 & \sqrt{2} & 0 & 0 \\
        0 & 0 & \sqrt{2} & 0 \\
        0 & 0 & 0 & \sqrt{2} \\
        1 & 1 & 0 & 0 \\
        1 & 0 & 1 & 0 \\
        1 & 0 & 0 & 1 \\
        0 & 1 & 1 & 0 \\
        0 & 1 & 0 & 1 \\
        0 & 0 & 1 & 1 \\
    \end{bmatrix}.
\end{equation}

The unique optimal parameter configuration for FQS gate with minimum number parameter set $\bm{q}_1, \bm{q}_2, ..., \bm{q}_{10}$ is numerically derived as 
\begin{equation}
    \label{eq:opt_MC_FQS}
    \begin{bmatrix}
        \bm{q}^T_1 \\ \bm{q}^T_2 \\
        \bm{q}^T_3 \\ \bm{q}^T_4 \\
        \bm{q}^T_5 \\ \bm{q}^T_6 \\
        \bm{q}^T_7 \\ \bm{q}^T_8 \\
        \bm{q}^T_9 \\ \bm{q}^T_{10}
    \end{bmatrix}
     =
    \begin{bmatrix}
        a & b & b & b \\
        b & a & b & b \\
        b & b & a & b \\
        b & b & b & a \\
        c & c & d & d \\
        c & d & c & d \\
        c & d & d & c \\
        d & c & c & d \\
        d & c & d & c \\
        d & d & c & c
    \end{bmatrix}
\end{equation}
and its arbitrary rotation and (individual) reversal, where $a=\sqrt{3}/2$, $b=-1/(2\sqrt{3})$, and $c^2 + d^2 = 1/2$, where $c \approx 0.7049, d \approx -0.0561$.

\subsection{\texorpdfstring{$\Delta E$}{ΔE} Distributions sampled with various parameter configurations}\label{apdx:Distribution}
In the experiment in Sec.~\ref{subsec:expB}, we performed optimization of only one gate to investigate the estimation error of the target gate. 
In the main text we shows only the case of $U_2$ (Ry gate in $U_2$ for Rotosolve case) as the target gate. 
In this section we show another case, that is the case of the target gate is $U_0$ for the FQS and Fraxis case and the Ry gate of $U_0$ for the Rotosolve case. 
Note that the number of shot per circuit $S$ is set to 10, 100, 1000 and the parameters of all the gates are initialized to random values and only the target gate is optimized. 
Fig.~\ref{ fig:appdxF} shows the results of all the additional experiments. The title of each subplot tells the target gate and other settings.

\bibliography{references}

\end{document}